\documentclass[sigconf,nonacm]{acmart}
\AtBeginDocument{%
  }

\usepackage{soul}
\usepackage{xcolor}
\newcommand{\revA}[1]{\textcolor{black}{#1}}
\newcommand{\revC}[1]{\textcolor{black}{#1}}
\usepackage{tikz}
\usepackage{xcolor}
\usepackage{graphicx}
\usepackage{booktabs}
\usepackage{multirow}
\usepackage{pgfplots}
\pgfplotsset{compat=1.18}
\usetikzlibrary{arrows.meta,decorations.pathreplacing,positioning,fit,backgrounds,shapes.geometric,calc}
\usepgfplotslibrary{fillbetween}

\makeatletter
\renewcommand{\@ACM@checkaffil}{}%
\makeatother
\begin{document}

\title{CutBackdoor: A Circuit Cut Triggered Backdoor Attack on Variational Quantum Algorithms}

\author{Ahatesham Bhuiyan$^{1}$, Hoang Ngo$^{2}$, Cheng Chu$^{3}$, Qian Lou$^{1}$\\
        Lei Jiang$^{4}$, My T. Thai$^{2}$, Mengxin Zheng$^{1}$}
\affiliation{%
  \institution{$^{1}$University of Central Florida \quad
               $^{2}$University of Florida\\
               $^{3}$North Carolina State University \quad
               $^{4}$Indiana University Bloomington}}






\begin{abstract}
  Variational Quantum Algorithms (VQAs) are a leading paradigm for near-term quantum computing, combining parameterized quantum circuits with classical optimization across quantum chemistry, combinatorial optimization, and quantum machine learning. Since real-world VQA deployments routinely require circuits that exceed available hardware capacity, quantum circuit cutting has become an indispensable execution strategy, and pre-trained parameters are increasingly distributed through public repositories, introducing supply-chain security risks that have received little attention. Prior quantum backdoor attacks either introduce detectable circuit modifications or depend on device-specific noise, and none consider circuit cutting as an attack surface. We present CutBackdoor, the first parameter-supply-chain backdoor that uses cut circuit execution from CutQC as the deployment-time trigger against VQAs. Under noisy finite-shot circuit-cut execution, poisoned parameters preserve full-circuit validation performance while substantially increasing cut-path reconstruction error, without any circuit modification. The trigger activates when a resource-limited victim responds to a qubit-capacity mismatch by invoking the cutting workflow, requiring no attacker presence at deployment. We provide a theoretical analysis and empirically validate it across varying shot budgets. Evaluation across multiple VQA benchmarks on IBM quantum backends demonstrates cut-path energy amplification of $1.3\times$ to $2.9\times$ \revA{over clean baselines on the VQE and VQD benchmarks while maintaining small stealthiness error on the full-circuit path. The cut-path gap persists across the evaluated backends and cut placements under matched compilation; Zero-Noise Extrapolation provides only partial mitigation, and the diagonal-cost QAOA benchmark
  delineates the attack's structural boundary.}
\end{abstract}

\maketitle
\pagestyle{empty}          
\thispagestyle{empty}

\section{Introduction}
\label{sec:intro}

Variational Quantum Algorithms (VQAs)\cite{cerezo2021variational} have emerged as one of the leading paradigms for achieving practical quantum advantage in the Noisy Intermediate-Scale Quantum (NISQ) era. VQAs combine shallow parameterized quantum circuits with classical optimization to tolerate current hardware limitations like gate errors and short coherence times, enabling significant progress across diverse high-impact applications. In quantum chemistry, the Variational Quantum Eigensolver (VQE)~\cite{peruzzo2014variational} estimates molecular ground-state energies with high precision, opening pathways to accelerated drug discovery and the design of novel materials~\cite{mcclean2016theory, peruzzo2014variational, kandala2017hardware, mcardle2020quantum, romero2019strategies}. In combinatorial optimization, the Quantum Approximate Optimization Algorithm (QAOA) addresses problems such as combinatorial optimization~\cite{shaydulin2019multistart,farhi2014quantum, crooks2018performance} and credit risk analysis~\cite{egger2020quantum,egger2020credit} that are intractable for classical methods at scale. In quantum machine learning, VQA-based classifiers and generative models have demonstrated competitive accuracy on benchmark tasks~\cite{chu2023iqgan, chu2025lstm, schuld2018supervised}. The versatility and near-term viability of VQAs have made them the central focus of both academic research and industrial quantum computing initiatives.

Scaling VQAs to a large number of qubits is essential for quantum advantage, yet training large-scale VQCs from scratch on NISQ hardware remains fundamentally difficult, making quantum parameter transfer the dominant practical strategy. Each additional qubit doubles the accessible Hilbert space dimension~\cite{9669165}, expanding the representational capacity of VQAs and enabling modeling of larger and more complex problem instances, including larger molecules~\cite{fujii2022deep}, graphs~\cite{augustino2024strategies}, and datasets. In practice, applications of real-world relevance typically demand a substantial qubit count. For example, VQE is unlikely to outperform classical computational chemistry methods without on the order of $\sim$100 qubits~\cite{gonthier2022measurements}, and recent VQA implementations~\cite{9669165,farrell2024scalable,fujii2022deep} commonly target systems exceeding this threshold. A larger qubit count also enriches entanglement structures within VQCs, improving expressive power, reducing optimization loss, and yielding better solution quality. However, scaling VQCs on NISQ hardware introduces a critical training obstacle: noise-induced barren plateaus~\cite{wang2021noise} cause gradients of the cost function to vanish exponentially with circuit size, making random initialization and naive gradient-based optimization impractical for large circuits. Quantum parameter transfer~\cite{skogh2023accelerating,sureshbabu2024parameter} addresses this by reusing parameters optimized for related problem instances to initialize new circuits, accelerating convergence and alleviating flat optimization landscapes. Empirical evidence supports its effectiveness across VQE and QAOA~\cite{shaydulin2023parameter,shaydulin2021qaoakit,galda2021transferability}. Modern quantum software frameworks, including Qiskit~\cite{javadi2024quantum} and PennyLane~\cite{bergholm2018pennylane}, actively support parameter reuse, and pre-trained parameter sets are widely distributed through public repositories and cloud platforms~\cite{shaydulin2021qaoakit}, forming a growing quantum parameter supply chain that closely mirrors the classical deep learning ecosystem.

\begin{figure}
  \resizebox{\columnwidth}{!}{\input{Images/introq}}
  \caption{Circuit cutting (b) exposes a larger attack surface
           than the standard execution path (a).}
  \Description{Two-panel diagram comparing quantum circuit 
  execution paths. Panel (a) shows the standard path: a quantum 
  circuit passes through a compiler to a quantum device. Panel (b) 
  shows the circuit cutting path via CutQC: the full circuit is 
  split into subcircuits A and B, each compiled and executed on a 
  quantum device, then recombined on a classical device. A label 
  notes that the cut path exposes more attack surface.}
\end{figure}

Quantum circuit cutting has emerged as the primary solution to the limited qubit availability on NISQ hardware, a fundamental deployment bottleneck even when parameter transfer alleviates the training difficulty of large-scale VQCs. Practical problem instances regularly require circuits that exceed current hardware capacity~\cite{preskill2018quantum}. Simulating the H$_3^+$ molecule requires at least six qubits, while realistic chemistry and materials-science applications may demand hundreds to thousands~\cite{li2024hybrid,patel2022quest}. Circuit cutting decomposes large circuits into smaller subcircuits that execute independently on constrained hardware, with full-circuit outputs reconstructed through classical postprocessing~\cite{peng2020simulating,tang2021cutqc,piveteau2023circuit,perlin2021quantum}. Two techniques exist: wire cutting~\cite{peng2020simulating,tang2021cutqc}, which interrupts qubit wires via measurement and reinitialization, and gate cutting~\cite{piveteau2023circuit,mitarai2021constructing}, which decomposes multi-qubit gates into local operations with classical post-selection. However, circuit cutting introduces substantial reconstruction overhead~\cite{bechtold2023investigating,lowe2023fast,tang2021cutqc}: the classical and sampling cost of recombining subcircuit outputs via
quasi-probability methods grows exponentially with the number of cuts, making routine validation of cut-based paths prohibitively expensive. Practitioners therefore validate on full-circuit simulators, leaving cut-based execution paths largely unaudited.

Given the central role of circuit cutting in practical VQA deployment, the security implications of cut-based execution are critical yet almost entirely unexplored. Many VQA applications enabled by circuit cutting are security- and safety-sensitive, including molecular energy estimation for drug discovery~\cite{mcclean2016theory,mcardle2020quantum}, materials design~\cite{kandala2017hardware}, and portfolio optimization~\cite{egger2020quantum,egger2020credit}, where errors beyond chemical accuracy thresholds of 1--2 kcal/mol can mislead drug candidate selection or produce financially significant losses~\cite{mcardle2020quantum,romero2019strategies}. The cut-based reconstruction pipeline is inherently more sensitive to adversarial manipulation than full-circuit execution: subcircuit outputs are combined via signed reconstruction coefficients, and even modest systematic biases in these outputs are amplified through classical postprocessing into large errors in the final energy estimate. Despite this, prior research on circuit cutting has focused almost exclusively on reducing reconstruction overhead~\cite{bechtold2023investigating,lowe2023fast,tang2021cutqc}, optimizing cut placement, and extending cutting to new paradigms~\cite{mitarai2021constructing,piveteau2023circuit}. None of these works consider the security implications of adversarially crafted parameters flowing through the cutting pipeline.

Backdoor attacks have emerged as a serious threat to VQCs~\cite{chu2023qdoor,chu2023qtrojan,das2024trojan,chuqnbad}, in which an adversary distributes poisoned parameters that pass standard validation but activate malicious behavior under a specific trigger condition at deployment. Existing approaches rely on triggers tied to circuit structure~\cite{chu2023qtrojan}, compilation processes~\cite{chu2023qdoor,das2024trojan}, or device-specific noise patterns~\cite{chuqnbad}. These attacks either require the attacker to retain control over an external condition at deployment, or depend on a specific compilation setting or noise model that may not generalize across hardware. Exploiting circuit cutting as a trigger is fundamentally harder: the attack must be embedded entirely within the variational parameters, leave no structurally detectable artifact, produce correct results under full-circuit validation while inducing wrong results through the multi-stage CutQC reconstruction, and remain effective across varying backends and compilation configurations.

In this paper, we propose CutBackdoor, the first backdoor attack exploiting quantum circuit cutting as an adversarial trigger against VQAs. The core insight is that full-circuit and cut-based execution are structurally different pipelines, and a carefully crafted parameter set can behave correctly under one while failing systematically under the other. Our contributions are:
\begin{itemize}

\item We identify the noisy finite-shot circuit-cut execution path as a novel adversarial attack surface, and propose CutBackdoor as a dual-objective optimization framework that embeds adversarial behavior into variational parameters without circuit modification, preserving full-circuit stealthiness while substantially increasing cut-path reconstruction error.

\item We formally analyze how the energy deviation between the two execution paths depends on the variational parameters, and show that an adversary can exploit this dependence to maximize cut-path error while keeping full-circuit behavior indistinguishable from clean parameters.

\item We provide a theoretical analysis that upper-bounds the finite-shot gap permitted by cut reconstruction, showing that the attack surface widens with circuit-cutting overhead.


\revA{\item We evaluate CutBackdoor across five VQA benchmarks on IBMQ processors, demonstrating attack effectiveness on the VQE and VQD benchmarks across the evaluated backends and noise profiles, with the diagonal-cost QAOA benchmark marking the attack's structural boundary and Zero-Noise Extrapolation only partially mitigating it.}

\end{itemize}



\section{Background}
\label{sec:relwork}
\subsection{Variational Quantum Algorithms}

\subsubsection{Variational Quantum Algorithms}
Variational Quantum Algorithms (VQAs) are quantum-classical hybrid algorithms designed to solve optimization and simulation problems on near-term NISQ devices. A VQA encodes a problem into a Hamiltonian~$\mathcal{H}$ and employs a parameterized quantum circuit~$U(\boldsymbol{\theta})$, known as the \emph{ansatz}, to prepare a trial quantum state $|\psi(\boldsymbol{\theta})\rangle = U(\boldsymbol{\theta})|\mathbf{0}\rangle$. The objective is to find the optimal parameters that minimize the energy expectation:
\begin{equation}
  E(\boldsymbol{\theta})
  \;=\;
  \langle\psi(\boldsymbol{\theta})|\,\mathcal{H}\,|\psi(\boldsymbol{\theta})\rangle,
\end{equation}
which serves as the cost function. A classical optimizer iteratively updates~$\boldsymbol{\theta}$ based on evaluations of the cost function performed on the quantum device, forming a hybrid quantum-classical feedback loop until convergence. Prominent instances include the Variational Quantum Eigensolver (VQE)~\cite{peruzzo2014variational} for estimating molecular ground-state energies, Variational Quantum Deflation
(VQD)~\cite{higgott2019variational} for computing excited states,
and the Quantum Approximate Optimization Algorithm
(QAOA)~\cite{farhi2014quantum} for combinatorial optimization.

\subsubsection{Parameter Transfer and Quantum Supply Chains}
Training VQA parameters from scratch is challenging due to the barren plateau phenomenon~\cite{larocca2025barren, mcclean2018barren}, which causes cost function gradients to vanish exponentially with qubit count. Quantum parameter transfer~\cite{sureshbabu2024parameter,skogh2023accelerating} addresses this by reusing optimized parameters from related problem instances to initialize new circuits, accelerating convergence and avoiding flat optimization landscapes. This has been demonstrated effectively for QAOA on weighted Max-Cut, VQE across similar molecular Hamiltonians, and general parameterized circuit families. Frameworks such as QAOAKit~\cite{shaydulin2021qaoakit}, Qiskit~\cite{javadi2024quantum}, and PennyLane natively support parameter sharing, and pre-trained parameter sets are increasingly distributed through open-source repositories and quantum cloud platforms. This growing quantum parameter supply chain closely mirrors the classical deep learning ecosystem and inherits the same supply-chain security risks that have proven devastating in the classical domain.

\subsection{Quantum Circuit Cutting}
\label{sec:cutting}

When a target circuit requires more qubits than a device provides, \emph{circuit cutting} decomposes it into subcircuits that fit  within the hardware budget, with full-circuit outputs reconstructed  through classical postprocessing.

\subsubsection{Reconstruction Pipeline.}
\label{reconstruction}
The theoretical foundation is that any quantum channel on a cut wire can be expressed as a linear combination of prepare-and-measure  operations in a chosen basis~\cite{tang2021cutqc}. For \emph{wire cutting}, the Pauli basis $\{I, X, Y, Z\}$ replaces each cut wire with a measurement on the upstream side and a state preparation on the downstream side, across all basis combinations. \emph{Gate cutting}~\cite{piveteau2023circuit} instead decomposes two-qubit  entangling gates into sums of local operations, which is preferable  when wire cutting would break parametric dependencies. CutQC formalized wire cutting into an automated pipeline: it uses Mixed-Integer Programming to find cut locations that minimize postprocessing overhead, generates and executes all required subcircuit variants, and reconstructs the probability distribution of the original circuit. Qiskit's circuit knitting toolbox~\cite{qiskit_cutting2022} extends these ideas to gate cutting and observable estimation.

\begin{figure}[htbp]
  \centering
  \resizebox{\columnwidth}{!}{%
    \usetikzlibrary{backgrounds, fit, positioning, arrows.meta}
\definecolor{greenbox}{RGB}{200,230,200}
\definecolor{bluebox}{RGB}{190,215,240}
\definecolor{graybox}{RGB}{215,215,215}

\tikzset{
  gate/.style={rectangle, draw=black, fill=white, line width=0.65pt,
               minimum width=6.5mm, minimum height=6.5mm,
               font=\bfseries\small, inner sep=0pt},
  gbasis/.style={rectangle, draw=black!60, fill=orange!40, line width=0.55pt,
                 minimum width=6.0mm, minimum height=5.8mm,
                 font=\small, inner sep=0pt},
  cdot/.style={circle, fill=black, minimum size=2.6mm, inner sep=0pt},
  cplus/.style={circle, draw=black, fill=white, line width=0.65pt,
                minimum size=5.8mm, inner sep=0pt},
  ql/.style={font=\small, anchor=east},
  w/.style={draw=black, line width=0.65pt},
}

\newcommand{\cnotpair}[3]{%
  \node[cdot] at (#1,#2){};
  \draw[w](#1,#2)--(#1,#3);
  \node[cplus] at (#1,#3){$\oplus$};
}

\def\ws{0.85}  

\begin{tikzpicture}[x=1cm,y=1cm]

\def\Lshift{-0.565}   
\def\ya{\Lshift}
\def\yb{\Lshift-\ws}
\def\yc{\Lshift-2*\ws}

\node[ql] at (0,\ya){$q_0$};
\node[ql] at (0,\yb){$q_1$};
\node[ql] at (0,\yc){$q_2$};

\draw[w](0.08,\ya)--(4.90,\ya);
\draw[w](0.08,\yb)--(4.90,\yb);
\draw[w](0.08,\yc)--(4.90,\yc);

\node[gate] at (0.58,\ya){H};
\node[gate] at (0.58,\yc){H};
\cnotpair{1.40}{\ya}{\yb}
\cnotpair{3.10}{\yb}{\yc}
\node[gate] at (4.02,\yb){X};


\draw[red!80!black, dashed, line width=0.85pt]
  (2.29,{\Lshift+0.001})--(2.29,{\Lshift-2*\ws-0.001});

\draw[orange!90!red, line width=2.5pt, line cap=round]
  (2.05,\yb)--(2.58,\yb);
\filldraw[orange!90!red](2.05,\yb) circle(1.5pt);
\filldraw[orange!90!red](2.58,\yb) circle(1.5pt);


\draw[red!80!black, line width=2.2pt, -latex]
  (5.20,{-1.415})--(6.10,{-1.415});

\def\Rx{7.10}

\def\rya{0}
\def\ryb{-\ws}

\draw[w](\Rx+0.08,\rya)--(\Rx+4.20,\rya);
\draw[w](\Rx+0.08,\ryb)--(\Rx+4.20,\ryb);

\node[ql]  at (\Rx,\rya){$q_0$};
\node[ql]  at (\Rx,\ryb){$q_1$};

\node[gate] at (\Rx+0.60,\rya){H};
\cnotpair{\Rx+1.55}{\rya}{\ryb}

\node[gbasis] at (\Rx+2.32,\ryb){$I$};
\node[gbasis] at (\Rx+2.88,\ryb){$X$};
\node[gbasis] at (\Rx+3.44,\ryb){$Y$};
\node[gbasis] at (\Rx+4.00,\ryb){$Z$};

\begin{scope}[on background layer]
  \fill[greenbox, rounded corners=5pt]
    (\Rx-0.45,\rya+0.44) rectangle (\Rx+4.42,\ryb-0.44);
  \draw[black!40, line width=0.5pt, rounded corners=5pt]
    (\Rx-0.45,\rya+0.44) rectangle (\Rx+4.42,\ryb-0.44);
\end{scope}

\def\gap{0.22}
\def\bya{-2*\ws-\gap}   
\def\byb{-3*\ws-\gap}   

\draw[w](\Rx+1.62,\bya)--(\Rx+4.20,\bya);
\draw[w](\Rx+0.08,\byb)--(\Rx+4.20,\byb);

\node[ql] at (\Rx,\byb){$q_2$};

\node[gbasis] at (\Rx+0.50,\bya){$|0\rangle$};
\node[gbasis] at (\Rx+1.00,\bya){$|1\rangle$};
\node[gbasis] at (\Rx+1.50,\bya){$|{+}\rangle$};
\node[gbasis] at (\Rx+2.06,\bya){$|i\rangle$};

\node[gate] at (\Rx+0.60,\byb){H};

\cnotpair{\Rx+2.65}{\bya}{\byb}

\node[gate] at (\Rx+3.55,\bya){X};

\begin{scope}[on background layer]
  \fill[bluebox, rounded corners=5pt]
    (\Rx-0.45,\bya+0.44) rectangle (\Rx+4.42,\byb-0.44);
  \draw[black!40, line width=0.5pt, rounded corners=5pt]
    (\Rx-0.45,\bya+0.44) rectangle (\Rx+4.42,\byb-0.44);
\end{scope}

\end{tikzpicture}%
  }
  \caption{Circuit cutting example: a 3-qubit circuit is split 
  into two subcircuits via a single wire cut, executed independently, 
  and reconstructed from $4^K$ Pauli-basis branch outputs.}
  \Description{Diagram showing a 3-qubit quantum circuit cut into 
  two subcircuits across the Pauli basis I, X, Y, Z, with the 
  upstream side performing measurements and the downstream side 
  performing state preparations.}
  \label{fig:circuitcut}
\end{figure}

\subsubsection{Subcircuit Probability Distributions}
\label{sec:subcircuit_prob}
Let $U(\boldsymbol{\theta})$ be a parameterized $n$-qubit circuit partitioned into $S$ subcircuits using $K$ wire cuts, producing $M = 4^K$ branch configurations indexed by $b \in [M]$ with signed reconstruction weights $w_b \in \mathbb{R}$. Let $\rho_b^{(j)}(\boldsymbol{\theta})$ denote the exact output state of the $j$-th subcircuit in branch $b$; measuring it yields a
local distribution via Born's rule:
\begin{equation}
    p_{b,x_j}^{(j)}(\boldsymbol{\theta}) =
    \mathrm{Tr}\!\left[
        |x_j\rangle\langle x_j|\,\rho_b^{(j)}(\boldsymbol{\theta})
    \right].
\end{equation}
Since the $S$ subcircuits execute independently, the joint distribution for branch $b$ is
$
    p_b(\boldsymbol{\theta}) =
    \bigotimes_{j=1}^{S} p_b^{(j)}(\boldsymbol{\theta}).
$
The full-circuit output is a weighted sum of $4^K$ Kronecker products: each branch contributes a tensor product of its subcircuit distributions, scaled by its signed weight $w_b$.

\subsubsection{Finite-Shot Measurement}
\label{sec:shots}
Expectation values are estimated by averaging over $N$ independent measurement outcomes, each a \emph{shot}, governed by Born's rule. The empirical estimator is unbiased but carries shot-noise variance scaling as $\mathcal{O}(1/N)$. Under circuit cutting, $N$ shots per subcircuit yield an empirical joint distribution
$
\widehat{p}_b(\boldsymbol{\theta}) =
\bigotimes_{j=1}^{S} \widehat{p}_b^{(j)}(\boldsymbol{\theta}),
$
where each $\widehat{p}_b^{(j)}$ is an unbiased estimator of $p_b^{(j)}$. The pipeline aggregates outputs from $M = 4^K$ branches, each weighted by $w_b$, amplifying the variance by the cutting overhead $\gamma^2 = \left(\sum_b |w_b|\right)^2$ and yielding a variance bound of $\mathcal{O}(\gamma^2/N)$~\cite{peng2020simulating, CarreraVazquez2024}. Since $\gamma$ grows exponentially with $K$, the cut-based estimator is substantially noisier than the full-circuit estimator even when both estimate the same ideal quantity. The $4^K$ sampling overhead makes cut-based validation prohibitively expensive, discouraging practitioners from running it during routine parameter development. These distributions and the variance structure form the building blocks in~\S\ref{sec:setup}.

\subsection{Quantum Compilation}

Quantum compilation translates a high-level quantum circuit into low-level instructions executable on a specific device~\cite{chong2017programming, murali2019noise, javadi2024quantum}. Because physical qubits have limited connectivity, restricted native gate sets, and device-specific noise profiles, compilation critically determines circuit fidelity and depth. \emph{Layout} maps logical qubits to physical qubits on the device coupling graph, with heuristics such as SabreLayout~\cite{li2019tackling} minimizing routing overhead. \emph{Routing} inserts SWAP gates when interacting logical qubits are not physically adjacent; since each SWAP decomposes into three CNOTs, minimizing SWAP count is essential. Qiskit's SabreSwap pass uses a stochastic heuristic, producing circuits whose depth varies across runs. \emph{Translation} decomposes gates into the device's native gate set. \emph{Optimization} eliminates redundant gates and consolidates single-qubit chains. Qiskit exposes four optimization levels (0--3), trading compilation time for gate-count reduction. Different choices of layout, routing, qubit selection, and optimization level produce structurally distinct compiled circuits with different gate counts, depths, and effective noise profiles for the same logical circuit, a sensitivity that directly affects how the attacker's training environment must match the victim's deployment configuration (\S\ref{sec:trigger_analysis}).

\subsection{Noise in NISQ Devices}

NISQ devices~\cite{preskill2018quantum} lack error correction and are limited by several noise sources. \emph{Decoherence} occurs when qubits lose their state due to environmental interactions, characterized by relaxation time $T_1$ and dephasing time $T_2$~\cite{lidar1998decoherence, ithier2005decoherence}. \emph{Gate errors} arise from imperfect control pulses, with two-qubit gates such as CNOT exhibiting typical error rates of $1$--$2\%$ on superconducting devices~\cite{smith2025single}. \emph{Readout errors} corrupt measurement outcomes by mapping $|0\rangle$ to $1$ or $|1\rangle$ to $0$ with non-negligible probability~\cite{o2016scalable}. \emph{Crosstalk} introduces correlated errors when operations on one qubit perturb its neighbours~\cite{sarovar2020detecting, knill2005quantum}. Together these sources constrain the depth and width of reliably executable circuits.

\section{Related Works}

Backdoor attacks were first introduced in classical deep learning, where an adversary poisons training data with a trigger pattern so that the model misclassifies any trigger-stamped input while behaving normally on clean inputs~\cite{gu2017badnets}. Translating this threat to quantum systems has attracted growing attention, though each existing approach targets a different stage of the quantum execution pipeline.Data-poisoning backdoor attacks transplanted from classical neural networks suffer from low success rates on quantum neural networks due to limited input dimensionality, and are eliminated by retraining. Circuit-level attacks~\cite{chu2023qtrojan, das2024trojan, das2023randomized} embed malicious behavior into the ansatz through adversarial gate insertion, but are detectable by inspecting the circuit layout. QDoor~\cite{chu2023qdoor} shifted the attack surface to parameter space via approximate synthesis, producing circuits that
behave correctly before synthesis but maliciously afterwards. A separate line extends backdoor attacks to hybrid classical-quantum
networks~\cite{guo2025backdoor}, and QNBAD~\cite{chuqnbad} crafts parameters that corrupt Zero-Noise Extrapolation under a specific noise model. Circuit cutting has been studied for security only as a defensive primitive~\cite{typaldos2024leveraging, typaldos2025quantum}, using subcircuit submission to obfuscate circuits against untrusted cloud providers.

CutBackdoor departs from all of these by targeting the cutting and reconstruction pipeline rather than the circuit architecture, parameter synthesis, or noise mitigation. Prior attacks require external trigger control at deployment (QTrojan), a specific compilation configuration (QDoor), or a particular device noise model (QNBAD); CutBackdoor instead uses the hardware constraint itself as the trigger, firing whenever the victim's device cannot accommodate the full circuit. The threat is timely because three trends converge in VQA deployment: pre-trained parameters are increasingly shared through public repositories~\cite{shaydulin2021qaoakit, javadi2024quantum},
circuit cutting is unavoidable when applications exceed local hardware capacity~\cite{li2024hybrid, patel2022quest}, and the $4^K$ reconstruction overhead makes routine cut-based validation prohibitively expensive. The resulting validation gap matters most in the domains where VQAs do, including quantum chemistry and portfolio optimization, where small systematic errors can mislead drug-candidate selection or cause significant financial losses~\cite{mcardle2020quantum, egger2020quantum}.

\section{Threat Model}
\label{sec:methodology}

\input{Images/mainfig}

\subsection{Attacker Capabilities and Knowledge}
We consider a threat model consistent with prior work~\cite{chuqnbad, chu2023qdoor}: the attacker has full access to the training process of a variational quantum algorithm and can arbitrarily influence parameter optimization, but for hardware and compilers operates with standard-user permissions and no privileged access. The attacker knows common toolchains (Qiskit, BQSKit, PennyLane) and can simulate end-to-end execution under realistic noise using public simulators and IBMQ hardware. Beyond the circuit, the attacker knows how CutQC is applied at deployment, including how the MIP solver selects cut locations, how subcircuits are generated and assigned parameter subsets, how Pauli bases are enumerated across cut wires, and how the output is reconstructed, and can therefore simulate the cut-based path during training and directly target the reconstructed energy the victim will observe. No interaction with the victim is required at deployment; the attack assumes only that the victim follows the standard CutQC workflow with default subcircuit compilation, under which it transfers across the backends and noise configurations of \S\ref{sec:results} without knowledge of the victim's device, calibration, or mitigation strategy. Sensitivity to non-default compilation is analyzed in \S\ref{sec:trigger_analysis}. The trigger is the structural mismatch between circuit size and available hardware, which the attacker anticipates confidently since pre-trained parameters are most valuable precisely when the circuit exceeds local hardware capacity.

\subsection{Attacker Goals}
The attacker has two simultaneous goals. The primary goal is \textit{attack effectiveness}: when the victim applies circuit cutting, the reconstructed energy diverges from the full-circuit value the victim validated, a deviation that reads as ordinary cutting overhead and device noise. The secondary goal is \textit{attack stealthiness}: on a full-circuit simulator, the same parameters stay close to the clean baseline and pass routine validation. \revA{The adversarial training procedure satisfies both tensioned goals by penalizing full-circuit energy error while pushing the reconstructed energy toward the wrong target on the cut-based path. A victim validating only on a full-circuit simulator observes accurate results and has no indication the parameters are compromised; the backdoor activates silently at deployment when hardware constraints force circuit cutting.}

\subsection{Victim Assumptions}

The victim is a legitimate quantum practitioner who downloads pre-trained VQA parameters from a public source for a computational task such as molecular energy estimation or combinatorial optimization. They validate the parameters on a full-circuit simulator before deployment, which is the standard and computationally practical approach. At deployment, the victim's hardware has fewer qubits than the circuit requires, so circuit cutting is applied automatically. The victim has no reason to suspect the parameters are malicious, since they passed full-circuit validation, and has no practical means to perform exhaustive cut-based validation due to its prohibitive sampling overhead.

\section{CutBackdoor}
\label{cut}

\subsection{Trigger Mechanism}
The trigger in CutBackdoor is not an external input, server configuration, or device-specific noise condition, but a structural property of the victim's deployment: when the circuit requires $n$ qubits and the device supports only $w < n$, circuit cutting becomes the only viable path. This constraint is outside the victim's control, making the trigger automatic and undetectable under standard validation. Figure~\ref{fig:threat_model} illustrates the scenario: the attacker uploads poisoned parameters to a public repository; the
victim validates them on a full-circuit simulator (panel b), but at deployment hardware constraints force the CutQC pipeline (panels c, d), where the backdoor activates.

\subsubsection{CutQC as the Trigger}
CutBackdoor targets CutQC~\cite{tang2021cutqc}, the first and most widely adopted automated cutting pipeline. When hardware cannot accommodate the full circuit, CutQC uses Mixed-Integer Programming to place cuts, executes the subcircuit variants independently, and reconstructs the output through classical postprocessing. Because this pipeline is deterministic given the circuit and hardware, the attacker can simulate it during training and craft parameters whose finite-shot reconstruction deviates systematically from the full-circuit result.

\subsubsection{Variance Asymmetry Between Execution Paths.}
The trigger exploits a finite-shot asymmetry between two execution modes of the same circuit. Under full-circuit execution, the unitary is applied coherently and the energy is estimated by direct measurement. Under CutQC execution, the circuit is partitioned into subcircuits executed independently across all Pauli bases, and the output is reconstructed by combining branch outputs through signed weights. This multi-stage reconstruction amplifies sampling fluctuations far more than direct measurement, growing exponentially in the number of cuts. The two estimators share the same ideal expectation, but their finite-shot realizations diverge: crafted parameters can preserve full-circuit accuracy while steering the cut-path estimator into high-variance regions where reconstruction error is amplified, without any circuit modification. We bound this divergence in \S\ref{sec:bound}.

\subsubsection{Stealth.}

The trigger is stealthy for three reasons. First, the trigger condition is indistinguishable from a normal deployment scenario, since any practitioner who responds to a qubit-capacity mismatch by invoking CutQC activates it. Second, the backdoored parameters pass standard full-circuit validation, which is the only computationally affordable validation approach available to the victim. Third, when the attack fires, the elevated cut-path energy is naturally attributed to reconstruction noise or cutting overhead rather than parameter poisoning. Unlike QTrojan~\cite{chu2023qtrojan}, where injected gates are structurally detectable, or QNBAD~\cite{chuqnbad}, where the attack is device-specific, CutBackdoor leaves no observable artifact in the circuit architecture and persists across the evaluated backends and cut placements under matched compilation.

\begin{figure}[t]
  \centering

\definecolor{hfill}{RGB}{80,80,200}
\definecolor{rzfill}{RGB}{180,90,200}
\definecolor{sqxfill}{RGB}{60,160,210}
\definecolor{cb}{RGB}{65,130,200}

\tikzset{
  wr/.style    = {draw=black, line width=0.5pt},
  Hg/.style    = {rectangle, draw=hfill!70!black, fill=hfill, text=white,
                  font=\bfseries\fontsize{5}{5}\selectfont,
                  minimum width=4.5mm, minimum height=3.8mm, inner sep=0pt},
  Rg/.style    = {rectangle, draw=rzfill!70!black, fill=rzfill, text=white,
                  font=\bfseries\fontsize{3.8}{4}\selectfont,
                  minimum width=5.2mm, minimum height=3.8mm, inner sep=0pt, align=center},
  Sg/.style    = {rectangle, draw=sqxfill!70!black, fill=sqxfill, text=white,
                  font=\bfseries\fontsize{3.8}{4}\selectfont,
                  minimum width=5.2mm, minimum height=3.8mm, inner sep=0pt, align=center},
  dot/.style   = {circle, fill=cb, draw=cb, minimum size=1mm, inner sep=0pt},
  ql/.style    = {font=\fontsize{5.5}{6}\selectfont, anchor=east, text=black},
  sl/.style    = {font=\fontsize{5.5}{6}\selectfont\sffamily, anchor=north, text=black},
  qn/.style    = {circle, draw=cb!80!black, fill=cb, text=white,
                  font=\bfseries\fontsize{4.5}{4.5}\selectfont,
                  minimum size=4.2mm, inner sep=0pt},
  qe/.style    = {draw=cb, line width=1.2pt, line cap=round},
}

\newcommand{\cnot}[3]{%
  \node[dot] at (#1,#2){};
  \draw[cb, line width=0.55pt](#1,#2)--(#1,#3);
  \draw[cb, fill=white, line width=0.55pt](#1,#3) circle(0.065);
  \draw[cb, line width=0.55pt](#1-0.065,#3)--(#1+0.065,#3);
  \draw[cb, line width=0.55pt](#1,#3-0.065)--(#1,#3+0.065);
}

\def\p{0.36}

\begin{tikzpicture}[x=1cm,y=1cm]


\begin{scope}
  \coordinate(N4) at (0.35, 0);
  \coordinate(N5) at (1.15, 0);
  \coordinate(N6) at (1.95, 0);
  \coordinate(N3) at (1.15,-0.68);
  \coordinate(N0) at (0.35,-1.36);
  \coordinate(N1) at (1.15,-1.36);
  \coordinate(N2) at (1.95,-1.36);
  \draw[qe](N4)--(N5)--(N6);
  \draw[qe](N5)--(N3)--(N1);
  \draw[qe](N0)--(N1)--(N2);
  \foreach \q/\c in {4/N4,5/N5,6/N6,3/N3,0/N0,1/N1,2/N2}
    \node[qn] at (\c) {\q};
  \node[sl] at (1.15,-1.72) {(a) IBMQ Oslo coupling map};
\end{scope}

\begin{scope}[shift={(2.80,0)}]
  \foreach \q/\k in {0/0,1/1,2/2,3/3,4/4}
    \node[ql] at (0,-\k*\p) {$q_{\q}$};
  \foreach \k in {0,1,2,3,4}
    \draw[wr](0.04,-\k*\p)--(3.35,-\k*\p);
  \node[Hg] at (0.46,0){H};
  \cnot{1.00}{0}{-\p}
  \cnot{1.60}{-\p}{-2*\p}
  \cnot{2.20}{0}{-3*\p}
  \cnot{2.80}{-2*\p}{-4*\p}

  \node[sl] at (1.65,-4*\p-0.28) {(b) Original circuit};
\end{scope}

\def\rowB{-2.72}   

\begin{scope}[shift={(0,\rowB)}]
  \def\WL{6.4}     
  \node[ql] at (0,0)        {$q_0{\to}0$};
  \node[ql] at (0,-\p)      {$q_1{\to}1$};
  \node[ql] at (0,-2*\p)    {$q_2{\to}2$};
  \node[ql] at (0,-3*\p)    {$q_3{\to}3$};
  \node[ql] at (0,-4*\p)    {$q_4{\to}4$};
  \node[font=\fontsize{5.5}{6}\selectfont\itshape,anchor=east]
        at (0,-5*\p) {$\mathit{anc}{\to}5$};
  \foreach \k in {0,1,2,3,4,5}
    \draw[wr](0.04,-\k*\p)--(\WL,-\k*\p);

  \node[Rg] at (0.3,0)  {$R_z$\\$\!\frac{\pi}{2}\!$};
  \node[Sg] at (0.8,0)  {$\!\sqrt{X}\!$};
  \node[Rg] at (1.3,0)  {$R_z$\\$\!\frac{\pi}{2}\!$};

  \def\startX{1.8}
  \def\stepX{0.28}
  \def\cnt{0}
  \newcommand{\addcnot}[2]{%
    \pgfmathsetmacro{\x}{\startX + \cnt*\stepX}
    \cnot{\x}{#1}{#2}
    \pgfmathsetmacro{\cnt}{\cnt+1}
    \xdef\cnt{\cnt}
  }
  \addcnot{0}{-\p}           
  \addcnot{-\p}{-2*\p}           
  \addcnot{0}{-\p}       
  \addcnot{-\p}{0}           
  \addcnot{0}{-\p}      
  \addcnot{-\p}{-3*\p}       
  \addcnot{-2*\p}{-\p}           
  \addcnot{-\p}{-2*\p}     
  \addcnot{-2*\p}{-\p}        
  \addcnot{-\p}{-3*\p}     
  \addcnot{-3*\p}{-\p}     
  \addcnot{-\p}{-3*\p}     
  \addcnot{-3*\p}{-5*\p}     
  \addcnot{-5*\p}{-3*\p}     
  \addcnot{-3*\p}{-5*\p}     
  \addcnot{-5*\p}{-4*\p}     

  \node[sl] at (3.1,-5*\p-0.28)
    {(c) Trivial layout $+$ Basic route \enspace (2Q:\,16,\ depth:\,19)};
\end{scope}

\def\rowC{-5.22}

\begin{scope}[shift={(0,\rowC)}]
  \def\WL{5.10}
  \node[ql] at (0,0)       {$q_3{\to}0$};
  \node[ql] at (0,-\p)     {$q_0{\to}1$};
  \node[ql] at (0,-2*\p)   {$q_1{\to}3$};
  \node[ql] at (0,-3*\p)   {$q_4{\to}4$};
  \node[ql] at (0,-4*\p)   {$q_2{\to}5$};
  \foreach \k in {0,1,2,3,4}
    \draw[wr](0.04,-\k*\p)--(\WL,-\k*\p);
  \node[Rg] at (1.18,-\p)  {$R_z$\\$\!\frac{\pi}{2}\!$};
  \node[Sg] at (1.68,-\p)  {$\!\sqrt{X}\!$};
  \node[Rg] at (2.18,-\p)  {$R_z$\\$\!\frac{\pi}{2}\!$};
  \cnot{2.95}{-\p}{-2*\p}
  \cnot{3.55}{-\p}{0}
  \cnot{3.55}{-2*\p}{-4*\p}
  \cnot{4.25}{-3*\p}{-4*\p}

  \node[sl] at (2.55,-4*\p-0.28)
    {(d) Sabre layout $+$ Sabre route \enspace (2Q:\,4,\ depth:\,6)};
\end{scope}

\end{tikzpicture}
  \caption{Comparison of transpilation strategies on the 
  \texttt{IBMQ\_Oslo} device. (a) Coupling map of the 7-qubit 
  heavy-hex topology. (b) Original logical circuit. (c) Trivial 
  initial layout with basic routing yields 16 two-qubit gates 
  at depth 19. (d) Sabre layout with Sabre routing yields only 
  4 two-qubit gates at depth 6.}
  \Description{Four-panel diagram. Panel (a) shows the heavy-hex 
  coupling graph of IBMQ Oslo with seven nodes. Panel (b) shows 
  the original logical circuit. Panel (c) shows the same 
  circuit compiled with trivial layout and basic routing, 
  resulting in many SWAP insertions. Panel (d) shows the 
  circuit compiled with Sabre layout and routing, resulting in 
  far fewer two-qubit gates and shallower depth.}
  \label{fig:trigger}
\end{figure}

\subsection{Trigger Condition Analysis}
\label{sec:trigger_analysis}
The CutBackdoor trigger activates through the CutQC subcircuit execution path, and its effectiveness depends on whether the victim's subcircuit compilation matches the attacker's training environment. During adversarial training, the attacker simulates the full CutQC pipeline under a fixed subcircuit compilation, embedding adversarial biases calibrated to a specific compiled topology. If the victim compiles subcircuits differently, the topology changes and the adversarial bias may fail to propagate through reconstruction.

Figure~\ref{fig:trigger} shows how much the same logical subcircuit can differ under two compilation strategies on the IBMQ\_Oslo coupling map. Under a trivial layout with BasicSwap routing, sequential qubit assignment forces numerous SWAPs along non-adjacent paths, yielding 16 two-qubit gates at depth 19. Under SabreLayout with SabreSwap, the layout heuristic places interacting qubits on adjacent nodes and the same subcircuit compiles to only 4 two-qubit gates at depth 6. This fourfold reduction in gate count and threefold
reduction in depth alter both noise accumulation and qubit ordering, shifting the assembled output distribution that feeds the energy computation. The adversarial bias calibrated during training no longer maps onto this altered reconstruction, and the attack energy may not reach its target. This sensitivity affects only the cut-based path: full-circuit validation uses the statevector path, independent of subcircuit compilation.

In practice, however, these deviations represent non-standard usage outside the default CutQC workflow. Practitioners using CutQC for automatic cutting naturally adopt low-overhead subcircuit compilation, since aggressive routing is unnecessary when subcircuits are already small. Any victim operating within the standard CutQC pipeline therefore uses a configuration consistent with the attacker's training environment, so the trigger fires reliably without
attacker intervention.


\subsection{Attack Overview}
CutBackdoor embeds adversarial behavior into the variational parameter set $\boldsymbol{\theta}$ of a parameterized quantum circuit such that the same parameters produce two distinct finite-shot behaviors depending on the execution path. Under full-circuit execution, the circuit produces an energy estimate close to the true ground-state value. Under noisy finite-shot CutQC-based execution, the cut-path estimator is steered toward a target value $E_{\mathrm{wrong}}$ chosen by the attacker. The attack requires no modification to the circuit architecture, injects no additional gates, and leaves no structurally detectable artifact.

\subsection{Setup and Notation}
\label{sec:setup}

Let $U(\boldsymbol{\theta})$ be a parameterized $n$-qubit circuit with parameters $\boldsymbol{\theta} \in \mathbb{R}^d$, and let $\mathcal{H}$ be a target observable with diagonal elements $h_x$ in the computational basis. We formalize the energy-level quantities used throughout the attack
analysis across two evaluation settings. The underlying subcircuit probability distributions $p_b^{(j)} (\boldsymbol{\theta})$ and their empirical counterparts $\widehat{p}_b^{(j)}(\boldsymbol{\theta})$ are defined in \S\ref{sec:subcircuit_prob}.

\noindent\textbf{Ideal case.}
The exact full-circuit energy is:
\begin{equation}
    E_{\mathrm{full}}(\boldsymbol{\theta})
    :=
    \mathrm{Tr}\!\left[
        \mathcal{H}\,\rho_{\mathrm{full}}
        (\boldsymbol{\theta})
    \right],
\end{equation}
where $\rho_{\mathrm{full}}(\boldsymbol{\theta})$ is the exact output state of the uncut circuit. For the cut-based path, the exact branch energy for branch $b \in [M]$ is:
\begin{equation}
    \mu_b(\boldsymbol{\theta})
    :=
    \sum_x p_{b,x}(\boldsymbol{\theta})\,h_x,
\end{equation}
where $p_{b,x}(\boldsymbol{\theta})$ is the joint probability over the full-system bitstring $x$, obtained from the tensor product of subcircuit
distributions as defined in \S\ref{sec:subcircuit_prob}. The ideal cut reconstruction is then:
\begin{equation}
    E_{\mathrm{cut}}(\boldsymbol{\theta})
    :=
    \sum_{b=1}^{M} w_b\,\mu_b(\boldsymbol{\theta}),
    \label{eq:ecut_ideal}
\end{equation} and under exact cutting, $E_{\mathrm{cut}}(\boldsymbol{\theta}) = E_{\mathrm{full}}(\boldsymbol{\theta})$~\cite{peng2020simulating}.

\noindent\textbf{Practical case (finite shots).}
As introduced in \S\ref{sec:shots}, in practice
$N$ independent shots are allocated to any executed
circuit or subcircuit. The empirical branch energy
estimate for branch $b$ is:
\begin{equation}
    \widehat{\mu}_b(\boldsymbol{\theta})
    :=
    \sum_x
    \widehat{p}_{b,x}(\boldsymbol{\theta})\,h_x,
\end{equation}
and the finite-shot estimators for the two execution
paths are:
\begin{align}
    \widehat{E}_{\mathrm{full}}(\boldsymbol{\theta})
    &:=
    \sum_{x}
    \widehat{p}_{\mathrm{full},x}
    (\boldsymbol{\theta})\,h_x,
    \label{eq:efull_empirical}
    \\
    \widehat{E}_{\mathrm{cut}}(\boldsymbol{\theta})
    &:=
    \sum_{b=1}^{M}
    w_b\,\widehat{\mu}_b(\boldsymbol{\theta}).
    \label{eq:ecut_empirical}
\end{align}
Both estimators are unbiased:
$\mathbb{E}[\widehat{E}_{\mathrm{full}}
(\boldsymbol{\theta})] =
E_{\mathrm{full}}(\boldsymbol{\theta})$
and
$\mathbb{E}[\widehat{E}_{\mathrm{cut}}
(\boldsymbol{\theta})] =
E_{\mathrm{cut}}(\boldsymbol{\theta})$.
However, as established in \S\ref{sec:shots},
the variance of $\widehat{E}_{\mathrm{cut}}$
is amplified by $\gamma^2 =
\left(\sum_b |w_b|\right)^2$ relative to
$\widehat{E}_{\mathrm{full}}$, a structural
asymmetry that the attack directly exploits.

\subsection{Exploiting the Cutting-Induced Attack Surface}
\label{sec:finite}
In practice, the ideal expectations $E_{\mathrm{full}}(\boldsymbol{\theta})$ and $E_{\mathrm{cut}}(\boldsymbol{\theta})$ are inaccessible; the victim and the execution environment observe only the finite-shot estimators $\widehat{E}_{\mathrm{full}}(\boldsymbol{\theta})$ and $\widehat{E}_{\mathrm{cut}}(\boldsymbol{\theta})$ of \S\ref{sec:setup}. Both
are unbiased estimators of the same ideal quantity, yet their finite-shot realizations diverge due to measurement randomness (\S\ref{sec:shots}). We define the \emph{finite-shot evaluation gap} as:
\begin{equation}
    \Delta(\boldsymbol{\theta})
    :=
    \left|
        \widehat{E}_{\mathrm{cut}}(\boldsymbol{\theta})
        -
        \widehat{E}_{\mathrm{full}}(\boldsymbol{\theta})
    \right|.
    \label{eq:gap}
\end{equation}
This gap is governed by the variances of the two estimators. For the full
circuit, the standard Monte Carlo variance is bounded by the spread of the
observable over the exact state distribution:
\begin{equation}
    \mathrm{Var}\!\left(
        \widehat{E}_{\mathrm{full}}(\boldsymbol{\theta})
    \right)
    =
    \frac{1}{N}
    \left(
        \sum_x
        p_{\mathrm{full},x}(\boldsymbol{\theta})\,h_x^2
        -
        E_{\mathrm{full}}(\boldsymbol{\theta})^2
    \right).
\end{equation}
The variance of $\widehat{E}_{\mathrm{cut}}(\boldsymbol{\theta})$ is
fundamentally different. As established in \S\ref{sec:shots}, the empirical joint distribution is built from the tensor product of independent subcircuit measurements and recombined with signed weights $w_b$, so statistical errors compound and the variance is amplified by the cutting overhead $\gamma^2 = \left(\sum_b |w_b|\right)^2$, yielding the looser bound $\mathcal{O}(\gamma^2/N)$~\cite{peng2020simulating, CarreraVazquez2024}.
Both variances are explicit functions of the distributions $p_{\mathrm{full}}(\boldsymbol{\theta})$ and $p_b^{(j)}(\boldsymbol{\theta})$ (\S\ref{sec:subcircuit_prob}), and are therefore highly sensitive to the location of $\boldsymbol{\theta}$ in the parameter landscape.

This $\boldsymbol{\theta}$-dependence creates an exploitable attack surface. An adversary treats the gap $\Delta(\boldsymbol{\theta})$ as an objective: by tuning $\boldsymbol{\theta}$, the attacker steers subcircuits into regions where the local variance of $\widehat{p}_b^{(j)}(\boldsymbol{\theta})$ is large, and the cutting pipeline amplifies these fluctuations through tensor products and signed weights, inflating the sampling error of $\widehat{E}_{\mathrm{cut}}(\boldsymbol{\theta})$ within the bound of
Theorem~\ref{thm:attack_loss_bound} while the un-amplified $\widehat{E}_{\mathrm{full}}(\boldsymbol{\theta})$ stays comparatively stable.

\subsection{Theoretical Bound on the Finite-Shot
Evaluation Gap}
\label{sec:bound}
Building on the evaluation gap $\Delta(\boldsymbol{\theta})$ defined in \S\ref{sec:finite}, we now establish a high-probability upper bound on how far the cut-path estimator can deviate from the full-circuit estimator under finite shots. The bound is one-sided: it limits the worst-case gap an attacker can achieve, but does not by itself prove that any particular attack realizes the bound. The empirical attack effectiveness within this bound is demonstrated in \S\ref{sec:results}.

\begin{theorem}[Finite-shot bound on the evaluation gap]
\label{thm:attack_loss_bound}
Let $\mathcal{H}$ be an observable with eigenvalues bounded in $[-1, 1]$ (e.g., a Pauli string). Suppose each of the $M = 4^K$ cut branches and the full circuit are estimated using the same number of independent measurement shots $N$. Then, for any confidence parameter $\delta \in (0,1)$, with probability at least $1-\delta$, the finite-shot evaluation gap satisfies:
\begin{equation}
    \Delta(\boldsymbol{\theta})
    \leq
    (\gamma+1)
    \sqrt{\frac{2\log(4M/\delta)}{N}},
    \label{eq:attack_loss_bound}
\end{equation}
where $\gamma := \sum_{b=1}^{M} |w_b|$.
\end{theorem}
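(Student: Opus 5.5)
The plan is to combine Hoeffding's inequality with a union bound, after splitting the gap with the triangle inequality around the common ideal value. Under exact cutting, $E_{\mathrm{cut}}(\boldsymbol{\theta}) = E_{\mathrm{full}}(\boldsymbol{\theta})$ (\S\ref{sec:setup}). Hence
\begin{equation*}
\Delta(\boldsymbol{\theta}) \le \bigl|\widehat{E}_{\mathrm{cut}}(\boldsymbol{\theta}) - E_{\mathrm{cut}}(\boldsymbol{\theta})\bigr| + \bigl|\widehat{E}_{\mathrm{full}}(\boldsymbol{\theta}) - E_{\mathrm{full}}(\boldsymbol{\theta})\bigr|,
\end{equation*}
and it suffices to control each deviation separately. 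For the cut term, linearity of \eqref{eq:ecut_ideal} and \eqref{eq:ecut_empirical} gives
\begin{equation*}
\bigl|\widehat{E}_{\mathrm{cut}} - E_{\mathrm{cut}}\bigr| \le \sum_b |w_b|\,\bigl|\widehat{\mu}_b - \mu_b\bigr| \le \gamma \max_b \bigl|\widehat{\mu}_b - \mu_b\bigr|.
\end{equation*}
The task therefore reduces to a uniform deviation bound over the $M$ branch estimators together with a bound for the single full-circuit estimator.

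Each of these $M+1$ estimators is an empirical mean of $N$ i.i.d. shot outcomes, each lying in $[-1,1]$, because $\mathcal{H}$ has eigenvalues in $[-1,1]$ and so $|h_x|\le 1$. Hoeffding's inequality then gives, for each such estimator $\widehat{\mu}$ with mean $\mu$,
\begin{equation*}
\Pr\bigl(|\widehat{\mu}-\mu| > t\bigr) \le 2\exp(-Nt^2/2).
\end{equation*}
Set $t = \sqrt{2\log(4M/\delta)/N}$. Each failure probability is then $2\cdot\delta/(4M) = \delta/(2M)$. A union bound over the $M+1 \le 2M$ events gives total failure probability at most $\delta$. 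On the complementary event, every branch deviation and the full-circuit deviation are at most $t$. Therefore $\Delta(\boldsymbol{\theta}) \le \gamma t + t = (\gamma+1)t$, which is \eqref{eq:attack_loss_bound}. The loose constant $4M$ inside the logarithm leaves the slack needed for the union count and the two-sided factor.

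The main obstacle is justifying that each $\widehat{\mu}_b$ really is an average of $N$ i.i.d. bounded samples. The branch distribution is a tensor product $\bigotimes_j \widehat{p}^{(j)}_b$ of independently sampled subcircuit distributions, not a sample from the joint distribution, so $\widehat{\mu}_b$ is a product-form (V-statistic-like) estimator rather than a plain empirical mean. I would handle this in one of two ways.
\begin{itemize}
\item First try: model each branch estimate as $N$ shots drawn from $p_b$ itself, pairing the $j$-th shots of the subcircuits, which matches the theorem's hypothesis of ``$N$ shots per branch''. Under that reading, Hoeffding applies directly.
\item Fallback: apply McDiarmid's bounded-differences inequality to $\widehat{\mu}_b$ as a function of all $SN$ subcircuit shots. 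Changing one shot moves $\widehat{\mu}_b$ by at most $2/N$, since $\widehat{p}^{(j)}_b$ changes by $1/N$ in total variation and $|h_x|\le 1$. This yields a comparable tail with $S$ folded into the constant. Unbiasedness for the mean follows from the independence of the subcircuits, as stated in \S\ref{sec:setup}.
\end{itemize}
I would also note that the bound is a worst case over all $\boldsymbol{\theta}$ for fixed $\boldsymbol{\theta}$, so no uniformity over the parameter space is claimed.
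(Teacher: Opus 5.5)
Your core argument is the paper's proof: the triangle inequality around $E_{\mathrm{cut}}(\boldsymbol{\theta})=E_{\mathrm{full}}(\boldsymbol{\theta})$, then $\sum_b |w_b|\,|\widehat{\mu}_b-\mu_b|\le\gamma\max_b|\widehat{\mu}_b-\mu_b|$, then Hoeffding with range $2$ at $t=\sqrt{2\log(4M/\delta)/N}$, and finally a union bound. The paper splits the failure budget as $\delta/2$ for the branches and $\delta/2$ for the full circuit. You take one union over $M+1\le 2M$ events, which is only a cosmetic difference.

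The paper does not address the obstacle you flag. It simply asserts that $\widehat{\mu}_b$ ``is the sample mean of $N$ such independent single shot measurements,'' which is your first reading. Be aware, though, that neither of your options proves the stated inequality for the estimator actually defined in \S\ref{sec:setup}, where $\widehat{p}_b=\bigotimes_j\widehat{p}^{(j)}_b$. Pairing shots index-by-index replaces that estimator with a different one. McDiarmid with differences $2/N$ over $SN$ shots gives the tail $2\exp(-Nt^2/(2S))$, so the bound picks up an extra factor $\sqrt{S}$ and is not the stated one.

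The repair is Hoeffding's U-statistic device, which keeps the constant exactly. Write $X^{(j)}_1,\dots,X^{(j)}_N$ for the shots of subcircuit $j$, and $h(\cdot)$ for $h_x$ evaluated on the concatenated bitstring. Then
\begin{equation*}
\widehat{\mu}_b=\frac{1}{N^S}\sum_{i_1,\dots,i_S}h\bigl(X^{(1)}_{i_1},\dots,X^{(S)}_{i_S}\bigr)=\mathbb{E}_{\pi}\Bigl[\frac{1}{N}\sum_{i=1}^{N} h\bigl(X^{(1)}_i,X^{(2)}_{\pi_2(i)},\dots,X^{(S)}_{\pi_S(i)}\bigr)\Bigr],
\end{equation*}
where $\pi_2,\dots,\pi_S$ are independent uniform permutations of $\{1,\dots,N\}$. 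For each fixed choice of permutations, the inner average is a plain mean of $N$ i.i.d.\ draws from $p_b$ with values in $[-1,1]$. Convexity of $\exp$ (Jensen over $\pi$) and Hoeffding's lemma then give $\mathbb{E}\,e^{s(\widehat{\mu}_b-\mu_b)}\le e^{s^2/(2N)}$. The Chernoff step returns exactly $2\exp(-Nt^2/2)$.

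So your pairing idea is correct once you average over all pairings instead of fixing one. With that step, the theorem holds as stated for the tensor-product estimator, and the step the paper's proof leaves implicit is closed.
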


\begin{proof}
By the triangle inequality and the fact that $E_{\mathrm{full}}(\boldsymbol{\theta}) = E_{\mathrm{cut}}(\boldsymbol{\theta})$
from~\eqref{eq:ecut_ideal}:
\begin{align}
    \Delta(\boldsymbol{\theta})
    \leq
    \bigl|
        \widehat{E}_{\mathrm{cut}}(\boldsymbol{\theta})
        - E_{\mathrm{cut}}(\boldsymbol{\theta})
    \bigr|
    +
    \bigl|
        \widehat{E}_{\mathrm{full}}(\boldsymbol{\theta})
        - E_{\mathrm{full}}(\boldsymbol{\theta})
    \bigr|.
    \label{eq:triangle_attack}
\end{align}

Because the observable spectrum is bounded in $[-1,1]$, the energy evaluated from any single measurement shot falls within $[-1,1]$. The
empirical branch estimator $\widehat{\mu}_b(\boldsymbol{\theta})$, defined
in \S\ref{sec:setup}, is the sample mean of $N$ such independent single shot measurements. By Hoeffding's inequality applied to each branch
$b$ and a union bound over all $b \in [M]$, with probability at least $1-\delta/2$:
\begin{equation}
    |\widehat{\mu}_b(\boldsymbol{\theta})
    - \mu_b(\boldsymbol{\theta})|
    \leq
    \sqrt{\frac{2\log(4M/\delta)}{N}}
    \qquad \text{for all } b = 1,\dots,M.
\end{equation}
The total sampling error of the cut estimator
is thus bounded by:
\begin{align}
    \bigl|
        \widehat{E}_{\mathrm{cut}}(\boldsymbol{\theta})
        - E_{\mathrm{cut}}(\boldsymbol{\theta})
    \bigr|
    &=
    \left|
        \sum_{b=1}^{M} w_b
        \bigl(
            \widehat{\mu}_b(\boldsymbol{\theta})
            - \mu_b(\boldsymbol{\theta})
        \bigr)
    \right|
    \nonumber \\
    &\leq
    \sum_{b=1}^{M} |w_b|\,
    |\widehat{\mu}_b(\boldsymbol{\theta})
    - \mu_b(\boldsymbol{\theta})|
    \nonumber \\
    &\leq
    \gamma
    \sqrt{\frac{2\log(4M/\delta)}{N}}.
    \label{eq:cut_bound}
\end{align}

Similarly, $\widehat{E}_{\mathrm{full}} (\boldsymbol{\theta})$ is the sample mean of $N$ independent bounded single-shot measurements.
By Hoeffding's inequality and using $M = 4^K \geq 1$, with probability at least $1-\delta/2$:
\begin{equation}
    \bigl|
        \widehat{E}_{\mathrm{full}}(\boldsymbol{\theta})
        - E_{\mathrm{full}}(\boldsymbol{\theta})
    \bigr|
    \leq
    \sqrt{\frac{2\log(4M/\delta)}{N}}.
    \label{eq:full_bound}
\end{equation}

Applying a final union bound,~\eqref{eq:cut_bound} and~\eqref{eq:full_bound} hold simultaneously with probability at least $1-\delta$. Substituting into~\eqref{eq:triangle_attack}
yields the combined factor $(\gamma+1)$.
\end{proof}

\noindent\textbf{Remark.}
Theorem~\ref{thm:attack_loss_bound} bounds the realized gap by
$\mathcal{O}(\gamma/\sqrt{N})$, and since $\gamma$ grows exponentially with the number of cuts $K$, more cuts admit a larger worst-case deviation. The bound is one-sided: it caps the attack surface without guaranteeing any attack saturates it. The reconstruction overhead that discourages cut-path validation thus also widens this surface. We confirm the $K$-dependence empirically in \S\ref{sec:results}, where the realized gap grows with $K$ yet stays well below the ceiling.

\subsection{Backdoor Training Objective}
\label{sec:loss}

We formulate CutBackdoor as a multi-task optimization problem. The attacker trains a poisoned parameter set $\boldsymbol{\theta}^*$ to simultaneously minimize the full-circuit energy and maximize the
finite-shot evaluation gap $\Delta(\boldsymbol{\theta})$ defined in \S\ref{sec:finite}. The overall training loss is:
\begin{align}
    \mathcal{L}(\boldsymbol{\theta})
    =
    \underbrace{
        \left|
            \widehat{E}_{\mathrm{full}}
            (\boldsymbol{\theta})
        \right|
    }_{\mathcal{L}_{\mathrm{stl}}(\boldsymbol{\theta})
    \;(\text{stealth term})}
    \;-\;
    \lambda \cdot
    \underbrace{
        \left|
            \widehat{E}_{\mathrm{cut}}
            (\boldsymbol{\theta})
            -
            \widehat{E}_{\mathrm{full}}
            (\boldsymbol{\theta})
        \right|
    }_{\mathcal{L}_{\mathrm{atk}}(\boldsymbol{\theta})
    \;(\text{attack term})},
    \label{eq:loss}
\end{align}
where $\lambda \geq 0$ governs the trade-off between stealthiness and attack potency.

\noindent\textbf{Stealth term.}
$\mathcal{L}_{\mathrm{stl}}(\boldsymbol{\theta})
= |\widehat{E}_{\mathrm{full}}(\boldsymbol{\theta})|$ drives $\boldsymbol{\theta}$ toward the variational ground state, ensuring the poisoned parameters remain indistinguishable from legitimately trained
parameters under full-circuit validation. No externally known reference energy is required.

\noindent\textbf{Attack term.}
$\mathcal{L}_{\mathrm{atk}}(\boldsymbol{\theta})
= \Delta(\boldsymbol{\theta})$
is the finite-shot evaluation gap from~\eqref{eq:gap}. Maximizing this term drives $\boldsymbol{\theta}$ into regions where the cut-path energy deviates maximally from the full-circuit energy. As established in
\S\ref{sec:finite} and bounded in Theorem~\ref{thm:attack_loss_bound}, this deviation is amplified by $\gamma^2$ due to the reconstruction overhead of the cutting pipeline, while the full-circuit path remains
unaffected.

\noindent\textbf{Why both objectives can be satisfied simultaneously.}
The variance of $\widehat{E}_{\mathrm{full}} (\boldsymbol{\theta})$ scales as $\mathcal{O}(1/N)$, while the variance of
$\widehat{E}_{\mathrm{cut}}(\boldsymbol{\theta})$ scales as $\mathcal{O}(\gamma^2/N)$ where $\gamma$ grows exponentially with $K$. Because
the two terms operate in fundamentally different statistical regimes, minimizing $\mathcal{L}_{\mathrm{stl}}$ does not suppress
$\mathcal{L}_{\mathrm{atk}}$. This scale separation suggests that an optimal $\lambda^*$ exists at which both objectives co-optimize
effectively, as analyzed in \S\ref{sec:lambda}.

\subsection{Optimizing the Backdoor: Role of $\lambda$}
\label{sec:lambda}

The attack weight $\lambda$ governs the balance between 
stealthiness and attack potency in~\eqref{eq:loss}. When 
$\lambda \to 0$, the optimization is dominated by $\mathcal{L}_{\mathrm{stl}}$ and converges to the clean VQA minimum: $\boldsymbol{\theta}$ remains in low-variance regions where the cut-path and full-circuit energies are nearly equal, and the attack is suppressed. When $\lambda \to \infty$, the optimization is dominated by $\mathcal{L}_{\mathrm{atk}}$, pushing $\boldsymbol{\theta}$ far from the clean variational manifold; this degrades $\widehat{E}_{\mathrm{full}}(\boldsymbol{\theta})$ alongside $\widehat{E}_{\mathrm{cut}}(\boldsymbol{\theta})$, causing the poisoned parameters to fail standard validation. At the optimal $\lambda^*$, both objectives co-optimize: 
$\mathcal{L}_{\mathrm{stl}}$ keeps $\widehat{E}_{\mathrm{full}}$ near the variational ground state while $\lambda^* \cdot \mathcal{L}_{\mathrm{atk}}$ drives $\boldsymbol{\theta}$ into regions where $\Delta(\boldsymbol{\theta})$ is large.

\section{Experiments}
\label{sec:exp}

\subsection{Dataset}
\label{sec:dataset}

To evaluate CutBackdoor attack against circuit cutting, we selected benchmarks that require partitioning due to qubit constraints on NISQ hardware. For quantum chemistry applications, we used molecular systems from the PennyLane Molecules dataset~\cite{azad2023pennylane}: $H_3^+$ (6 qubits) and $CH_2$ (14 qubits). Fermionic Hamiltonians were mapped to qubit operators via the Jordan-Wigner transformation~\cite{fradkin1989jordan}, yielding a sum of Pauli string operators acting on the qubit register.. For combinatorial optimization, we constructed QAOA circuits using 8-node MaxCut problem instances~\cite{sawaya2024hamlib}. We also evaluated Variational Quantum Deflation (VQD) on the $H_3^+$ and $H_4$ molecule to assess backdoor robustness. Table~\ref{tab:benchmarks} summarizes circuit characteristics. 

\begin{table}[t]
\centering
\caption{The VQA Benchmarks}
\label{tab:benchmarks}
\begin{tabular}{l c c r}
\toprule
\textbf{Benchmarks} & \textbf{Qubit} & \textbf{1-qubit gate} & \textbf{2-qubit gate} \\
\midrule
VQE - H3+ & 6 & 60 & 9\\
VQE - CH2 & 14 & 140 & 26 \\
VQD - H3+ & 13 & 177 &  66 \\
VQD - H4 & 17 & 275 & 92\\
QAOA -8  & 8 & 72 & 16 \\
\bottomrule
\end{tabular}
\end{table}

%

\begin{figure*}[htbp]
  \centering
  \resizebox{\textwidth}{!}{
\definecolor{c1q}{RGB}{174,214,241}   
\definecolor{c2q}{RGB}{245,183,177}   
\definecolor{cro}{RGB}{169,223,191}   
\definecolor{ct1}{RGB}{250,215,160}   
\definecolor{ct2}{RGB}{212,172,220}   

\definecolor{c1qd}{RGB}{52,152,219}
\definecolor{c2qd}{RGB}{192,57,43}
\definecolor{crod}{RGB}{39,174,96}
\definecolor{ct1d}{RGB}{211,84,0}
\definecolor{ct2d}{RGB}{125,60,152}

\begin{tikzpicture}

\begin{axis}[
  name=ax1,
  width=6.5cm, height=5.8cm,
  ybar=0pt,
  bar width=9pt,
  enlarge x limits=0.25,
  ylabel={\small Error Rate (\%)},
  title={\small\bfseries Gate Error Rates},
  title style={yshift=2pt},
  symbolic x coords={Perth,Oslo,Nairobi},
  xtick=data,
  xticklabel style={font=\small, anchor=north},
  ymin=0, ymax=1.8,
  ytick={0,0.3,0.6,0.9,1.2,1.5,1.8},
  yticklabel style={font=\small},
  ylabel style={font=\small, yshift=2pt},
  ymajorgrids=true,
  grid style={draw=gray!20, line width=0.4pt},
  axis background/.style={fill=white},
  legend style={
    at={(0.98,0.98)}, anchor=north east,
    font=\small, draw=gray!40,
    fill=white, inner sep=3pt,
    row sep=1pt,
  },
  legend cell align=left,
  legend image code/.code={
    \draw[#1, line width=0.5pt] (0cm,-0.1cm) rectangle (0.25cm,0.18cm);
  },
  clip=false,
  tick style={color=black!70, line width=0.4pt},
  axis line style={line width=0.5pt, color=black!60},
]

\addplot[ybar, fill=c1q, draw=c1qd, line width=0.6pt]
  coordinates { (Perth,0.6205) (Oslo,0.3331) (Nairobi,0.5569) };
\addlegendentry{1Q Gate}

\addplot[ybar, fill=c2q, draw=c2qd, line width=0.6pt]
  coordinates { (Perth,0.9718) (Oslo,0.8231) (Nairobi,0.8768) };
\addlegendentry{2Q Gate}

\end{axis}

\begin{axis}[
  name=ax2,
  at={(ax1.south east)}, anchor=south west, xshift=1.4cm,
  width=6.5cm, height=5.8cm,
  ybar=0pt,
  bar width=14pt,
  enlarge x limits=0.3,
  ylabel={\small Error Rate (\%)},
  title={\small\bfseries Readout Error Rates},
  title style={yshift=2pt},
  symbolic x coords={Perth,Oslo,Nairobi},
  xtick=data,
  xticklabel style={font=\small, anchor=north},
  ymin=0, ymax=4.5,
  ytick={0,0.5,1.0,1.5,2.0,2.5,3.0,3.5,4.0,4.5},
  yticklabel style={font=\small},
  ylabel style={font=\small, yshift=2pt},
  ymajorgrids=true,
  grid style={draw=gray!20, line width=0.4pt},
  axis background/.style={fill=white},
  clip=false,
  tick style={color=black!70, line width=0.4pt},
  axis line style={line width=0.5pt, color=black!60},
]

\addplot[ybar, fill=cro, draw=crod, line width=0.6pt]
  coordinates { (Perth,2.9871) (Oslo,1.5743) (Nairobi,2.6586) };

\end{axis}

\begin{axis}[
  name=ax3,
  at={(ax2.south east)}, anchor=south west, xshift=1.4cm,
  width=6.5cm, height=5.8cm,
  ybar=0pt,
  bar width=9pt,
  enlarge x limits=0.25,
  ylabel={\small Time ($\mu$s)},
  title={\small\bfseries Coherence Times},
  title style={yshift=2pt},
  symbolic x coords={Perth,Oslo,Nairobi},
  xtick=data,
  xticklabel style={font=\small, anchor=north},
  ymin=0, ymax=240,
  ytick={0,40,80,120,160,200,240},
  yticklabel style={font=\small},
  ylabel style={font=\small, yshift=2pt},
  ymajorgrids=true,
  grid style={draw=gray!20, line width=0.4pt},
  axis background/.style={fill=white},
  legend style={
    at={(0.98,0.98)}, anchor=north east,
    font=\small, draw=gray!40,
    fill=white, inner sep=3pt,
    row sep=1pt,
  },
  legend cell align=left,
  legend image code/.code={
    \draw[#1, line width=0.5pt] (0cm,-0.1cm) rectangle (0.25cm,0.18cm);
  },
  clip=false,
  tick style={color=black!70, line width=0.4pt},
  axis line style={line width=0.5pt, color=black!60},
]

\addplot[ybar, fill=ct1, draw=ct1d, line width=0.6pt]
  coordinates { (Perth,119.20) (Oslo,151.40) (Nairobi,96.15) };
\addlegendentry{T1}

\addplot[ybar, fill=ct2, draw=ct2d, line width=0.6pt]
  coordinates { (Perth,123.85) (Oslo,91.12) (Nairobi,84.24) };
\addlegendentry{T2}

\end{axis}

\end{tikzpicture}}
  \caption{Noise characterization of the three 7-qubit
  IBMQ. Gate error rates (left), readout error rates
  (center), and coherence times (right) for
  \texttt{IBMQ\_Perth}, \texttt{IBMQ\_Oslo}, and
  \texttt{IBMQ\_Nairobi}.}
  \label{fig:device_noise}
\end{figure*}

\subsection{Circuit Cutting Configuration}
\label{subsec:cutting-constraints}

All experiments use CutQC~\cite{tang2021cutqc} as the circuit  cutting framework, which invokes the Gurobi optimizer~\cite{gurobi2026}  to solve a Mixed-Integer Program (MIP) that identifies optimal cut  locations minimizing classical reconstruction cost, replicating the  automatic cutting procedure a real practitioner would invoke. Cutting  behavior is governed by the cutter\_constraints dictionary  (Table~\ref{tab:cutter_constraints}). The max subcircuit width  parameter enforces the qubit capacity of the target backend; max cuts caps the total wire-cut budget to bound reconstruction overhead, which scales as $\mathcal{O}(4^k)$ in the number of cuts~$k$. The num subcircuits parameter is supplied as an ordered list, allowing Gurobi to select the smallest feasible partition automatically. The max subcircuit cuts parameter limits cut edges per fragment to keep the reconstruction branch count tractable, and subcircuit size imbalance grants the solver flexibility for molecular circuits whose qubit counts do not partition evenly.

\subsection{Quantum Devices and Compilations}
\label{sec:devices}

All experiments are conducted on IBM quantum processors via Qiskit, with circuits transpiled at \texttt{optimization\_level=0} or \texttt{1} using a trivial initial layout to preserve the circuit structure intended by the attacker. To evaluate $E_{\mathrm{stl}}$, the full circuit is executed on \texttt{IBMQ\_Kolkata} (27q), which provides sufficient capacity to run the target ansatz without cutting and mirrors the victim's validation environment. To evaluate $E_{\mathrm{abs}}$, subcircuits produced by the CutQC pipeline are executed on three 7q processors: \texttt{IBMQ\_Perth}, \texttt{IBMQ\_Oslo}, and \texttt{IBMQ\_Nairobi}. As shown in Figure~\ref{fig:device_noise}, these devices exhibit meaningfully different noise profiles: \texttt{IBMQ\_Perth} has the highest two-qubit gate error rate 
(${\sim}1.0\%$) and readout error rate (${\sim}3.0\%$), while \texttt{IBMQ\_Oslo} is the lowest across both metrics and exhibits the longest $T_1$ (${\sim}150\,\mu$s). For generality analyses on smaller hardware, we additionally run selected experiments on three 5q processors: \texttt{IBMQ\_Manila}, \texttt{IBMQ\_Lima}, and 
\texttt{IBMQ\_Quito}.


\begin{table}[htbp]
\centering
\caption{Cutting constraints for n qubits circuits}
\label{tab:cutter_constraints}
\small
\begin{tabular}{lll}
\toprule
\textbf{Parameter} & \textbf{Role} & \textbf{Value} \\
\midrule
max\_subcircuit\_width      & Max qubits per subcircuit     & 6 \\
max\_cuts                   & Total wire-cut budget         & 10           \\
num\_subcircuits            & Partition counts to try       & $[2,\,3,\,4]$     \\
max\_subcircuit\_cuts       & Cut edges per subcircuit      & 6                 \\
subcircuit\_size\_imbalance & MIP balance relaxation        & 2               \\
\bottomrule
\end{tabular}
\end{table}

\subsection{Circuit Benchmarks and Training}
\label{sec:benchmarks}

Table~\ref{tab:benchmarks} summarizes the quantum circuits for  the five representative VQAs used in our evaluation, spanning  circuit sizes from 6 to 17 qubits with diverse ansatz  architectures. For VQE tasks, we adopt the ansatz proposed  in~\cite{tilly2022variational}; for QAOA, we follow the circuit  design in~\cite{wu2021towards}; and for VQD, we employ the  framework described in~\cite{higgott2019variational}. These  architectural differences yield single-qubit gate counts ranging  from 60 to 275 and two-qubit gate counts ranging from 9 to 92.  All VQAs were trained using Qiskit with the Adam optimizer, a  learning rate of $5\times10^{-3}$, momentum parameters  $\beta_1 = 0.9$ and $\beta_2 = 0.999$, and a perturbation coefficient of $c = 0.06$ for Simultaneous Perturbation Stochastic Approximation (SPSA) gradient estimation, which approximates gradients using only two circuit evaluations per step regardless of parameter count.

\subsection{Noise Mitigation}
\label{sec:zne}

To evaluate the robustness of CutBackdoor under error mitigation, we apply Zero Noise Extrapolation (ZNE)~\cite{temme2017error} using the Mitiq framework~\cite{larose2022mitiq}. ZNE estimates the noise-free expectation value by executing the circuit at multiple artificially scaled noise levels and extrapolating back to the zero-noise limit. For each circuit, we generate noisy variants at scaling factors $T \in \{1, 2, 3, 4, 5, 6\}$ and fit a second-degree polynomial over the sampled expectation values to obtain the extrapolated result. This setting is applied to both the full-circuit and cut-based execution paths, allowing us to assess whether the backdoor remains effective even when the victim employs standard error mitigation alongside circuit cutting.

\subsection{Evaluation Metrics}
\label{sec:metrics}

To quantify attack effectiveness and stealthiness, we define two evaluation metrics. For a given parameter set $\boldsymbol{\theta}$, we define:
\begin{align}
  E_{\mathrm{abs}} &=
    \bigl|\hat{E}_{\mathrm{cut}}(\boldsymbol{\theta})
          - E_{\mathrm{ideal}}(\boldsymbol{\theta})\bigr|,
    \label{eq:metric_atk} \\
  E_{\mathrm{stl}} &=
    \bigl|\hat{E}_{\mathrm{full}}(\boldsymbol{\theta})
          - E_{\mathrm{ideal}}(\boldsymbol{\theta})\bigr|,
    \label{eq:metric_stl}
\end{align}
where $E_{\mathrm{ideal}}(\boldsymbol{\theta})$ is the noiseless statevector expectation value of the full uncut circuit under $\boldsymbol{\theta}$. $E_{\mathrm{abs}}$ measures how far the reconstructed cut-path energy deviates from this ideal value, while $E_{\mathrm{stl}}$ measures how far the full-circuit energy deviates from the same baseline. A successful attack requires $E_{\mathrm{abs}}$ to be substantially elevated under backdoor parameters while $E_{\mathrm{stl}}$ remains small.

\begin{table*}[t]
\centering
\setlength{\tabcolsep}{6pt}
\renewcommand{\arraystretch}{0.95}
\caption{%
  CutBackdoor effectiveness across VQE, VQD, and QAOA benchmarks.
  $E_{\mathrm{stl}}$ (no-cut, \texttt{IBMQ\_Kolkata}) confirms stealthiness;
  $E_{\mathrm{abs}}$ (with-cut) measures attack potency. Parentheses give the
  CutBackdoor/clean ratio and \revA{the absolute gap
  $\Delta = E_{\mathrm{abs}}^{\mathrm{BD}} - E_{\mathrm{abs}}^{\mathrm{clean}}$.}
}
\label{tab:main_results}
\begin{footnotesize}
\begin{tabular}{ll c cccc}
\toprule
\multirow{2}{*}{\textbf{Benchmark}}
  & \multirow{2}{*}{\textbf{Schemes}}
  & \textbf{No-cut}
  & \multicolumn{4}{c}{\textbf{With-cut $E_{\mathrm{abs}}$}} \\
\cmidrule(lr){3-3}\cmidrule(lr){4-7}
  &
  & $E_{\mathrm{stl}}$
  & Kolkata & Perth & Oslo & Nairobi \\
\midrule
\multirow{2}{*}{VQE -- H$_3^+$}
  & Clean        & 0.076 & 0.098 & 0.166 & 0.122 & 0.193 \\
  & CutBackdoor  & 0.108\,{\scriptsize(1.4$\times$)}
                 & 0.235\,{\scriptsize(2.4$\times$,\,$\Delta{+}0.137$)}
                 & 0.313\,{\scriptsize(1.9$\times$,\,$\Delta{+}0.147$)}
                 & 0.262\,{\scriptsize(2.2$\times$,\,$\Delta{+}0.140$)}
                 & 0.305\,{\scriptsize(1.6$\times$,\,$\Delta{+}0.112$)} \\
\midrule
\multirow{2}{*}{VQE -- CH$_2$}
  & Clean        & 4.183 & 0.762 & 0.743 & 1.052 & 1.776 \\
  & CutBackdoor  & 3.792\,{\scriptsize(0.9$\times$)}
                 & 1.138\,{\scriptsize(1.5$\times$,\,$\Delta{+}0.376$)}
                 & 1.454\,{\scriptsize(2.0$\times$,\,$\Delta{+}0.711$)}
                 & 3.020\,{\scriptsize(2.9$\times$,\,$\Delta{+}1.968$)}
                 & 3.152\,{\scriptsize(1.8$\times$,\,$\Delta{+}1.376$)} \\
\midrule
\multirow{2}{*}{VQD -- H$_3^+$}
  & Clean        & 0.099 & 0.151 & 0.198 & 0.152 & 0.230 \\
  & CutBackdoor  & 0.104\,{\scriptsize(1.1$\times$)}
                 & 0.315\,{\scriptsize(2.1$\times$,\,$\Delta{+}0.164$)}
                 & 0.385\,{\scriptsize(2.0$\times$,\,$\Delta{+}0.187$)}
                 & 0.363\,{\scriptsize(2.4$\times$,\,$\Delta{+}0.211$)}
                 & 0.316\,{\scriptsize(1.4$\times$,\,$\Delta{+}0.086$)} \\
\midrule
\multirow{2}{*}{VQD -- H$_4$}
  & Clean        & 0.166 & 0.262 & 0.470 & 0.300 & 0.454 \\
  & CutBackdoor  & 0.151\,{\scriptsize(0.9$\times$)}
                 & 0.342\,{\scriptsize(1.3$\times$,\,$\Delta{+}0.080$)}
                 & 0.680\,{\scriptsize(1.5$\times$,\,$\Delta{+}0.210$)}
                 & 0.449\,{\scriptsize(1.5$\times$,\,$\Delta{+}0.149$)}
                 & 0.649\,{\scriptsize(1.4$\times$,\,$\Delta{+}0.195$)} \\
\midrule
\multirow{2}{*}{QAOA -- 8}
  & Clean        & 0.356 & 0.269 & 0.947 & 0.574 & 0.779 \\
  & CutBackdoor  & 0.380\,{\scriptsize(1.1$\times$)}
                 & 0.459\,{\scriptsize(1.7$\times$,\,$\Delta{+}0.190$)}
                 & 0.990\,{\scriptsize(1.0$\times$,\,$\Delta{+}0.043$)}
                 & 0.564\,{\scriptsize(1.0$\times$,\,$\Delta{-}0.010$)}
                 & 1.042\,{\scriptsize(1.3$\times$,\,$\Delta{+}0.263$)} \\
\bottomrule
\end{tabular}
\end{footnotesize}
\end{table*}

\section{Experimental Results}
\label{sec:results}

\subsection{Attack Efficiency}
\label{sec:efficiency}

We evaluate CutBackdoor across five VQA benchmarks executed on four IBM quantum processors under clean and backdoor parameter configurations. Table~\ref{tab:main_results} reports $E_{\mathrm{stl}}$ and $E_{\mathrm{abs}}$ in Hartree (Ha) for each benchmark and processor.

\textbf{VQE.}
For the primary VQE benchmark on the trihydrogen cation (H$_3^+$), CutBackdoor produces consistently elevated cut-path errors across all four processors. The no-cut stealthiness error increases modestly from 0.076 to 0.108, remaining within a range that would not raise concern during full-circuit validation. On the cut-based path, $E_{\mathrm{abs}}$ rises from 0.235 on \texttt{IBMQ\_Kolkata} to 0.313 on \texttt{IBMQ\_Perth} against clean baselines of 0.098 and 0.166, yielding amplification ratios of $1.6\times$ to $2.4\times$. The largest amplification occurs on the lowest-noise device, indicating that the attack signal emerges most cleanly when the cutting pipeline is the dominant error source rather than hardware noise. The larger CH$_2$ benchmark involves greater circuit depth and exhibits elevated clean baselines from accumulated noise, yet CutBackdoor further amplifies cut-path errors by $1.5\times$ on \texttt{IBMQ\_Kolkata} to $2.9\times$ on \texttt{IBMQ\_Oslo}, showing the attack surface persists at scale. Notably, $E_{\mathrm{stl}}$ under CutBackdoor (3.792) falls slightly below the clean baseline (4.183), confirming that the dual-objective loss anchors the full-circuit energy near the variational ground state even as it drives the cut-path energy toward the wrong target.

\textbf{VQD.}
On excited-state benchmarks, CutBackdoor remains effective despite the
additional orthogonality constraints VQD imposes on the ansatz. For VQD--H$_3^+$, amplification ranges from $1.4\times$ to $2.4\times$ across processors, with the largest absolute gain of 0.211 on \texttt{IBMQ\_Oslo}, while $E_{\mathrm{stl}}$ increases by only 0.005, the cleanest separation between attack and stealth in the table. For the VQD--H$_4$ ansatz, amplification is more modest at $1.3\times$--$1.5\times$, with absolute error increases from 0.080 on \texttt{IBMQ\_Kolkata} to 0.210 on \texttt{IBMQ\_Perth}; the narrower range reflects the higher clean baseline leaving less relative headroom, but CutBackdoor still elevates the cut-path error above the clean baseline on every device. In both cases the no-cut error stays near the clean baseline, confirming that the attack extends cleanly to excited-state tasks without sacrificing stealth.

\textbf{QAOA.}
For the MaxCut QAOA benchmark, CutBackdoor produces amplification ratios of $1.0\times$ to $1.7\times$ across backends, with the largest absolute increase of 0.263 on \texttt{IBMQ\_Nairobi}. The effect is more modest than on VQE and VQD, where ratios reach $2.9\times$. This contrast is consistent with the underlying mechanism: the parameter regions the attack exploits are accessible primarily when the observable couples non-trivially to the cut wire, which is typical for molecular Hamiltonians but less favorable for diagonal cost Hamiltonians like MaxCut. QAOA therefore bounds the attack's applicability rather than serving as a primary target.

Across the VQE and VQD benchmarks, CutBackdoor consistently elevates cut-path errors while maintaining full-circuit stealthiness, with amplification of $1.3\times$ to $2.9\times$ depending on processor and benchmark, \revA{while the diagonal-cost QAOA benchmark marks the attack's structural boundary.}


\subsection{Empirical Validation of the Finite-Shot Bound}
\label{sec:shot_validation}
Theorem~\ref{thm:attack_loss_bound} makes two predictions: the realized gap contracts as $\mathcal{O}(1/\sqrt{N})$ in the shot budget, and its ceiling widens with the cut count $K$ through the overhead $\gamma$. We validate both, sweeping $N$ at fixed $K$ (Fig.~\ref{fig:shots_eval}) and $K$ at fixed $N$ (Table~\ref{tab:kablation}).

\noindent\textbf{Shot budget.}
Fixing a CutBackdoor configuration $\boldsymbol{\theta}^*$ on VQE--H$_3^+$, we record 20 trials of $\Delta(\boldsymbol{\theta}^*) = |\hat{E}_{\mathrm{full}} - \hat{E}_{\mathrm{cut}}|$ per shot level from $1\mathrm{k}$ to $64\mathrm{k}$ (Fig.~\ref{fig:shots_eval}). The $\bar{\Delta} \pm \sigma$ band contracts as $N$ grows, matching the $\mathcal{O}(1/\sqrt{N})$ shrinkage, while the mean $\bar{\Delta}$ stays near $175\,$mHa at every $N$. This persistence is the signature of dual-objective training: the loss in~\eqref{eq:loss} steers $\boldsymbol{\theta}^*$ into high-variance regions of the cut-path estimator, whereas clean parameters sit in quieter regions and yield a much smaller $\bar{\Delta}$. The bound holds in both cases; what differs is whether the realized gap sits near its ceiling or far below it.

\noindent\textbf{Cut count.}

\revA{Holding $\boldsymbol{\theta}^*$ and the circuit fixed, we re-run CutQC under three partitions at a fixed shot budget (Table~\ref{tab:kablation}). The gap grows monotonically with $K$ ($178 \to 232 \to 242\,$mHa), confirming that more cuts genuinely widen the attack surface. Yet this growth is far slower than the bound: as $K$ rises from $1$ to $5$ the overhead $\gamma$ grows from roughly $16$ to $864$, expanding the $(\gamma{+}1)$ prefactor of Theorem~\ref{thm:attack_loss_bound} by over an order of magnitude, while $\bar{\Delta}$ rises only $1.36\times$. The increments diminish ($+54$ vs.\ $+10\,$mHa), consistent with the $M = 4^K$ branch count making training progressively harder. The realized gap is thus a nontrivial but sub-maximal fraction of the one-sided bound: Theorem~\ref{thm:attack_loss_bound} predicts the direction of the $K$-dependence and caps the worst case without claiming the attack saturates it.}

\begin{figure}[t]
  \centering
  \includegraphics[width=\linewidth]{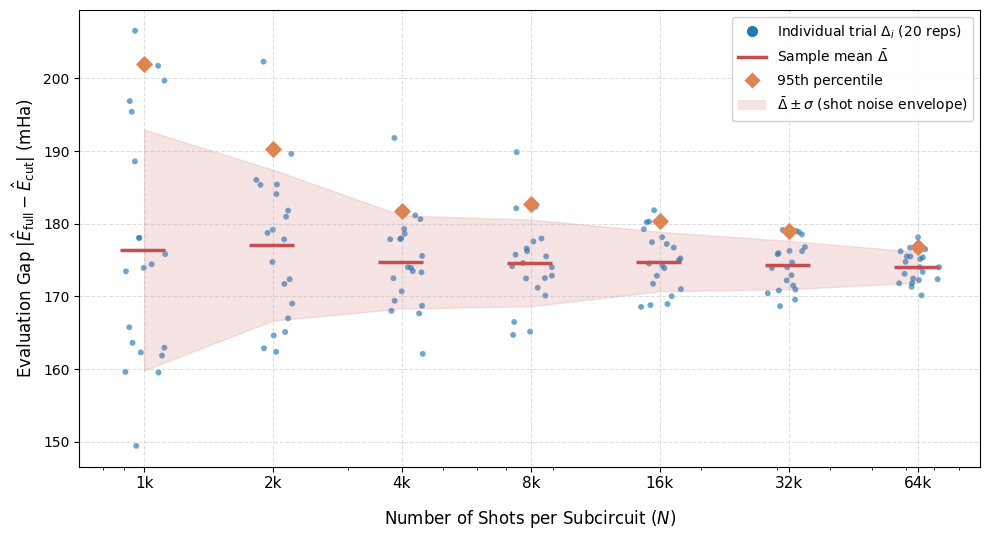}
  \Description{Plot showing finite-shot gap Delta(theta*) under backdoor parameters on IBMQ_Kolkata, with error bands and mean values as shot level increases.}
  \caption{Finite-shot gap $\Delta(\boldsymbol{\theta}^*)$ under backdoor
parameters on IBMQ\_Kolkata (20 trials per shot level). The
$\bar{\Delta} \pm \sigma$ band contracts as $\mathcal{O}(1/\sqrt{N})$
(Theorem~\ref{thm:attack_loss_bound}), while $\bar{\Delta}$ stays elevated.}
  \label{fig:shots_eval}
\end{figure}

\begin{table}[t]
  \centering
 \caption{\revA{Wire-cut ablation on VQE--H$_3^+$ at fixed $\boldsymbol{\theta}^*$ and shot budget. $\bar{\Delta}$ grows with $K$ but stays far below the $\mathcal{O}(\gamma)$ ceiling of Theorem~\ref{thm:attack_loss_bound}.}}
  \label{tab:kablation}
  \begin{tabular}{cccc}
    \toprule
    Subcircuits & $K$ & $M = 4^K$ & mean $\bar{\Delta}$ (mHa) \\
    \midrule
    2 & 1 & 4    & 177.8 \\
    3 & 3 & 64   & 231.9 \\
    4 & 5 & 1024 & 242.0 \\
    \bottomrule
  \end{tabular}
\end{table}

\subsection{Stealthiness}
\label{sec:stealth}
\subsubsection{Effect of Zero-Noise Extrapolation}
To assess whether CutBackdoor remains effective when the victim applies error mitigation alongside circuit cutting, we evaluate all benchmarks under three ZNE fitting methods: Linear, Polynomial, and Exponential extrapolation. Table~\ref{tab:zne_results} reports $E_{\mathrm{stl}}$ and $E_{\mathrm{abs}}$ for clean and CutBackdoor under each fitting method. Across all three ZNE variants, CutBackdoor consistently produces elevated cut-path errors relative to clean parameters on the VQE and VQD benchmarks. Under Linear extrapolation, CutBackdoor achieves amplification of $2.3\times$ on VQE--H$_3^+$ and $2.1\times$ on VQD--H$_3^+$, while stealthiness errors increase only modestly, confirming that ZNE does not neutralize the adversarial signal. Under Polynomial fitting, the most pronounced amplification is observed on VQD--H$_4$, where the cut-path error reaches 0.495 against a clean baseline of 0.031, a $15.8\times$ increase, indicating that Polynomial extrapolation can inadvertently amplify the adversarial signal when the noise-scaling curve is distorted by the attack. Similarly, under Exponential fitting, VQD--H$_3^+$ exhibits a $9.3\times$ amplification, with $E_{\mathrm{abs}}$ increasing from 0.034 to 0.318, the largest relative amplification observed across all ZNE variants. In contrast, the QAOA benchmark shows limited amplification across all three fitting methods, with $E_{\mathrm{abs}}$ ratios ranging from $0.7\times$ to $1.2\times$, substantially lower than on VQE and VQD benchmarks. Overall, ZNE does not constitute a complete defense against CutBackdoor on most benchmarks, though it substantially weakens the attack on VQE--CH$_2$ and QAOA.

\subsubsection{Effect of Compilation Settings}
\label{sec:compilation}
\begin{figure}[t]
  \centering
\definecolor{darkgreen}{RGB}{0,110,0}
\definecolor{attackred}{RGB}{180,30,30}

\begin{tikzpicture}
\begin{axis}[
    width=\columnwidth,          
    height=0.62\columnwidth,     
    ylabel={$E_{abs}$ },
    ylabel style={font=\scriptsize, yshift=2pt},
    ymin=0.0, ymax=1.52,
    xmin=0.6, xmax=4.4,
    xtick={1,2,3,4},
    xticklabels={
        {\scriptsize Trivial\\\scriptsize +Basic},
        {\scriptsize Sabre\\\scriptsize Basic},
        {\scriptsize Trivial\\\scriptsize +SabreSwap},
        {\scriptsize Sabre\\\scriptsize +SabreSwap}
    },
    xticklabel style={align=center, font=\scriptsize},
    yticklabel style={font=\scriptsize},
    ytick={0.0,0.2,0.4,0.6,0.8,1.0,1.2,1.4},
    tick label style={font=\scriptsize},
    ymajorgrids=true,
    xmajorgrids=true,
    grid style={gray!20, very thin},
    tick align=outside,
    axis line style={gray!55, thin},
    legend style={
        at={(0.03,0.97)},
        anchor=north west,
        font=\scriptsize,
        draw=gray!40,
        fill=white,
        inner sep=2pt,
        row sep=1pt,
    },
    clip=false,
    enlarge x limits=0.1,
]

\addplot [fill=red!8, draw=none, forget plot] coordinates {
    (1, 0.193781)(2, 0.866176)(3, 0.188399)(4, 0.869963)
    (4, 1.249095)(3, 0.303265)(2, 1.247843)(1, 0.297419)
} -- cycle;

\addplot [
    color=darkgreen, mark=*, line width=0.6pt,
    mark options={fill=darkgreen, draw=darkgreen, scale=0.7},
] coordinates {
    (1, 0.193781)(2, 0.866176)(3, 0.188399)(4, 0.869963)
};
\addlegendentry{Clean}

\addplot [
    color=attackred, dashed, line width=0.6pt,
    dash pattern=on 3.5pt off 2pt,
    mark=diamond*, mark options={fill=attackred, draw=attackred, scale=0.8},
] coordinates {
    (1, 0.297419)(2, 1.247843)(3, 0.303265)(4, 1.249095)
};
\addlegendentry{CutBackdoor}

\node[font=\tiny\bfseries, text=darkgreen, above left,  inner sep=0.5pt] at (axis cs:1, 0.193781) {0.194};
\node[font=\tiny\bfseries, text=darkgreen, above left,  inner sep=0.5pt] at (axis cs:2, 0.866176) {0.866};
\node[font=\tiny\bfseries, text=darkgreen, below right, inner sep=0.5pt] at (axis cs:3, 0.188399) {0.188};
\node[font=\tiny\bfseries, text=darkgreen, above left,  inner sep=0.5pt] at (axis cs:4, 0.869963) {0.870};

\node[font=\tiny\bfseries, text=attackred, below right, inner sep=0.5pt] at (axis cs:1, 0.297419) {0.297};
\node[font=\tiny\bfseries, text=attackred, above,       inner sep=1pt]   at (axis cs:2, 1.247843) {1.248};
\node[font=\tiny\bfseries, text=attackred, above right, inner sep=0.5pt] at (axis cs:3, 0.303265) {0.303};
\node[font=\tiny\bfseries, text=attackred, above,       inner sep=1pt]   at (axis cs:4, 1.249095) {1.249};





\end{axis}
\end{tikzpicture}
  \caption{Effect of CutBackdoor across different compilation settings.}
  \label{fig:compilation}
\end{figure}

To evaluate CutBackdoor under real-world deployment conditions, we test the VQE--H$_3^+$ ansatz on \texttt{IBMQ\_Manila} across four compilation configurations, combining two layout passes (Trivial, Sabre) with two routing passes (Basic, SabreSwap). Figure~\ref{fig:compilation} reveals two findings. First, compilation mismatch alone degrades energy accuracy for both clean and CutBackdoor parameters, independent of any attack: under Sabre layout, clean parameters incur $E_{\mathrm{abs}}\!\approx\!0.87$ versus $\approx\!0.19$ under trivial layout, driven by the additional SWAP overhead from aggressive routing. Second, the attack gap between clean and CutBackdoor persists across all four configurations. Under trivial layout, which matches the attacker's training environment, CutBackdoor produces $E_{\mathrm{abs}}\!\approx\!0.297$ against a clean baseline of $\approx\!0.194$, confirming the adversarial effect under matched compilation.

The decisive test is whether the effect is a property of trivial layout specifically or of matched compilation in general. \revA{We therefore train and deploy entirely under \texttt{SabreLayout}\,+\,\texttt{SabreSwap} (optimization level~3): here CutBackdoor reaches $E_{\mathrm{abs}}=0.118$ against a clean baseline of $0.019$ ($\Delta E_{\mathrm{abs}}=0.099$), while stealth stays flat ($E_{\mathrm{stl}}=0.041$ versus $0.040$). The attack
therefore persists under a fully Sabre-compiled pipeline, confirming that CutBackdoor is \emph{compilation-relative}: its effectiveness requires the attacker's training to match the victim's deployment rather than any specific layout, and it is not tied to trivial routing. The only requirement is that attacker and victim share a compilation configuration, which the standard CutQC workflow satisfies by default.}

\begin{table*}[t]
\centering
\setlength{\tabcolsep}{4pt}
\renewcommand{\arraystretch}{1}
\caption{%
  CutBackdoor robustness under Zero-Noise Extrapolation (ZNE).
  For each ZNE fitting method, $E_{\mathrm{stl}}$ (no-cut) and
  $E_{\mathrm{abs}}$ (with-cut) are reported in Ha.
  A successful attack requires large $E_{\mathrm{abs}}$ with small
  $E_{\mathrm{stl}}$ simultaneously.
  Parentheses denote the CutBackdoor\,/\,clean ratio;
  \revA{\textit{$\Delta$\,Error} gives the absolute deviation from Clean.}
}
\label{tab:zne_results}
\begin{footnotesize}
\begin{tabular}{l l  cc  cc  cc  cc  cc}
\toprule
& &
\multicolumn{2}{c}{\textbf{VQE -- H$_3^+$}} &
\multicolumn{2}{c}{\textbf{VQE -- CH$_2$}} &
\multicolumn{2}{c}{\textbf{VQD -- H$_3^+$}} &
\multicolumn{2}{c}{\textbf{VQD -- H$_4$}} &
\multicolumn{2}{c}{\textbf{QAOA -- 8}} \\
\cmidrule(lr){3-4}\cmidrule(lr){5-6}
\cmidrule(lr){7-8}\cmidrule(lr){9-10}\cmidrule(lr){11-12}
\textbf{ZNE Method} & \textbf{Schemes}
  & $E_{\mathrm{stl}}$ & $E_{\mathrm{abs}}$
  & $E_{\mathrm{stl}}$ & $E_{\mathrm{abs}}$
  & $E_{\mathrm{stl}}$ & $E_{\mathrm{abs}}$
  & $E_{\mathrm{stl}}$ & $E_{\mathrm{abs}}$
  & $E_{\mathrm{stl}}$ & $E_{\mathrm{abs}}$ \\
\midrule
\multirow{3}{*}{Linear}
  & Clean
    & 0.055 & 0.080 & 2.860 & 1.780 & 0.181 & 0.164 & 0.099 & 0.253 & 0.383 & 0.751 \\
  & CutBackdoor
    & 0.080\,{\scriptsize(1.5$\times$)} & 0.186\,{\scriptsize(2.3$\times$)}
    & 2.981\,{\scriptsize(1.0$\times$)} & 1.788\,{\scriptsize(1.0$\times$)}
    & 0.207\,{\scriptsize(1.1$\times$)} & 0.342\,{\scriptsize(2.1$\times$)}
    & 0.145\,{\scriptsize(1.5$\times$)} & 0.444\,{\scriptsize(1.8$\times$)}
    & 0.307\,{\scriptsize(0.8$\times$)} & 0.556\,{\scriptsize(0.7$\times$)} \\
\cmidrule(lr){2-12}
  & {\scriptsize\textit{$\Delta$ Error}}
    & {\scriptsize\textit{+0.025}} & {\scriptsize\textit{+0.106}}
    & {\scriptsize\textit{+0.121}} & {\scriptsize\textit{+0.008}}
    & {\scriptsize\textit{+0.026}} & {\scriptsize\textit{+0.178}}
    & {\scriptsize\textit{+0.046}} & {\scriptsize\textit{+0.191}}
    & {\scriptsize\textit{$-$0.076}} & {\scriptsize\textit{$-$0.195}} \\
\midrule
\multirow{3}{*}{Poly}
  & Clean
    & 0.038 & 0.086 & 2.615 & 1.812 & 0.149 & 0.154 & 0.111 & 0.031 & 0.361 & 0.634 \\
  & CutBackdoor
    & 0.086\,{\scriptsize(2.3$\times$)} & 0.155\,{\scriptsize(1.8$\times$)}
    & 3.189\,{\scriptsize(1.2$\times$)} & 1.943\,{\scriptsize(1.1$\times$)}
    & 0.169\,{\scriptsize(1.1$\times$)} & 0.258\,{\scriptsize(1.7$\times$)}
    & 0.086\,{\scriptsize(0.8$\times$)} & 0.495\,{\scriptsize(15.8$\times$)}
    & 0.275\,{\scriptsize(0.8$\times$)} & 0.792\,{\scriptsize(1.2$\times$)} \\
\cmidrule(lr){2-12}
  & {\scriptsize\textit{$\Delta$ Error}}
    & {\scriptsize\textit{+0.048}} & {\scriptsize\textit{+0.069}}
    & {\scriptsize\textit{+0.574}} & {\scriptsize\textit{+0.131}}
    & {\scriptsize\textit{+0.020}} & {\scriptsize\textit{+0.104}}
    & {\scriptsize\textit{$-$0.025}} & {\scriptsize\textit{+0.464}}
    & {\scriptsize\textit{$-$0.086}} & {\scriptsize\textit{+0.158}} \\
\midrule
\multirow{3}{*}{Exponential}
  & Clean
    & 0.055 & 0.058 & 3.047 & 5.532 & 0.150 & 0.034 & 0.113 & 0.354 & 0.452 & 2.156 \\
  & CutBackdoor
    & 0.058\,{\scriptsize(1.0$\times$)} & 0.213\,{\scriptsize(3.7$\times$)}
    & 3.190\,{\scriptsize(1.0$\times$)} & 5.605\,{\scriptsize(1.0$\times$)}
    & 0.174\,{\scriptsize(1.2$\times$)} & 0.318\,{\scriptsize(9.3$\times$)}
    & 0.138\,{\scriptsize(1.2$\times$)} & 0.619\,{\scriptsize(1.7$\times$)}
    & 0.162\,{\scriptsize(0.4$\times$)} & 2.689\,{\scriptsize(1.2$\times$)} \\
\cmidrule(lr){2-12}
  & {\scriptsize\textit{$\Delta$ Error}}
    & {\scriptsize\textit{+0.003}} & {\scriptsize\textit{+0.155}}
    & {\scriptsize\textit{+0.143}} & {\scriptsize\textit{+0.073}}
    & {\scriptsize\textit{+0.024}} & {\scriptsize\textit{+0.284}}
    & {\scriptsize\textit{+0.025}} & {\scriptsize\textit{+0.265}}
    & {\scriptsize\textit{$-$0.290}} & {\scriptsize\textit{+0.533}} \\
\bottomrule
\end{tabular}
\end{footnotesize}
\end{table*}




\subsection{Generality}
\label{sec:generality}
\begin{figure}[t]
  \centering
  \pgfplotsset{compat=1.18}

\definecolor{CleanColor}{RGB}{0, 130, 100}
\definecolor{BackColor}{RGB}{196, 72, 28}
\definecolor{AxisGray}{RGB}{90, 90, 90}
\definecolor{GridGray}{RGB}{215, 215, 215}
\definecolor{LabelGreen}{RGB}{0, 115, 85}
\definecolor{LabelRed}{RGB}{185, 55, 15}


\begin{tikzpicture}
\begin{axis}[
  width       = \columnwidth,
  height      = 0.60\columnwidth,
  xmin = 0.6, xmax = 3.4,
  xtick       = {1, 2, 3},
  xticklabels = {
    {CutQC\\[-1.5pt]\scriptsize(3+3)},
    {Manual\\[-1.5pt]\scriptsize(2+4)},
    {Manual\\[-1.5pt]\scriptsize(4+2)}
  },
  xticklabel style = {align=center, font=\footnotesize, text depth=1.8ex},
  ymin = 0.11, ymax = 0.44,
  ytick = {0.15,0.20,0.25,0.30,0.35,0.40},
  yticklabel style = {font=\footnotesize},
  ylabel       = {$E_{abs}$ },
  ylabel style = {font=\footnotesize},
  grid             = both,
  grid style       = {black!11, very thin},
  minor tick num   = 1,
  minor grid style = {black!6, ultra thin},
  axis line style  = {black!55, thin},
  tick style       = {black!55, thin},
  tick align       = outside,
  legend pos        = north east,
  legend cell align = left,
  legend style = {
    draw        = black!30,
    line width  = 0.4pt,
    fill        = white,
    fill opacity= 0.93,
    text opacity= 1,
    font        = \scriptsize,
    row sep     = -2pt,
    inner xsep  = 4pt,
    inner ysep  = 2pt,
  },
  clip = false,
]

\addplot [fill=red!14, draw=none, forget plot] coordinates {
  (1,0.187851)(2,0.226676)(3,0.158708)
  (3,0.296250)(2,0.382652)(1,0.293249)(1,0.187851)
};

\addplot [
  color=black!65!green, line width=1.0pt,
  mark=*, mark size=2.5pt,
  mark options={fill=black!65!green, draw=black!65!green, line width=0.5pt},
] coordinates {(1,0.187851)(2,0.226676)(3,0.158708)};
\addlegendentry{Clean}

\addplot [
  color=red!80!black, line width=1.0pt,
  dashed, dash pattern=on 4.5pt off 2.2pt,
  mark=diamond*, mark size=3.2pt,
  mark options={fill=red!80!black, draw=red!80!black, line width=0.5pt},
] coordinates {(1,0.293249)(2,0.382652)(3,0.296250)};
\addlegendentry{CutBackdoor}

\node [font=\scriptsize\bfseries, color=black!70!green,
       anchor=north, yshift=-3pt] at (axis cs:1,0.187851) {0.1879};
\node [font=\scriptsize\bfseries, color=black!70!green,
       anchor=south, yshift=3pt]  at (axis cs:2,0.226676) {0.2267};
\node [font=\scriptsize\bfseries, color=black!70!green,
       anchor=north, yshift=-3pt] at (axis cs:3,0.158708) {0.1587};

\node [font=\scriptsize\bfseries, color=red!80!black,
       anchor=south, yshift=3pt]  at (axis cs:1,0.293249) {0.2932};
\node [font=\scriptsize\bfseries, color=red!80!black,
       anchor=south, yshift=3pt]  at (axis cs:2,0.382652) {0.3827};
\node [font=\scriptsize\bfseries, color=red!80!black,
       anchor=south, yshift=3pt]  at (axis cs:3,0.296250) {0.2963};

\end{axis}
\end{tikzpicture}   
  \caption{Effect of CutBackdoor with different Cut locations}
  \label{fig:cut_strategy}
\end{figure}
\subsubsection{Effect of different Cut locations.}
To assess whether CutBackdoor depends on a specific subcircuit  boundary, we evaluate three cut configurations on the H$_3^+$  VQE ansatz (6q), executed on IBMQ Manila (5q), where cutting is unavoidable. The first configuration uses CutQC's automatic MIP-based placement, producing a balanced (3+3) partition; the second and third are manually specified asymmetric partitions of (2+4) and (4+2) qubits. As shown in Figure~\ref{fig:cut_strategy}, CutBackdoor elevates the cut-path error consistently across all three configurations. The (2+4) partition shows the largest absolute gap, rising from 0.227 to 0.383; the automatic (3+3) partition rises from 0.188 to 0.293; and the (4+2) partition rises from 0.159 to 0.296. Across all three, the backdoor error is elevated by 56--87\% over the clean baseline, confirming that the attack does not depend on any specific subcircuit boundary. This follows directly from the structural gap established in \S\ref{sec:bound}, which is a property of the CutQC reconstruction pipeline itself rather than of how the circuit is partitioned.
\begin{figure}[t]
  \centering
  \definecolor{cleangreen}{RGB}{25,130,90}
\definecolor{backdoorred}{RGB}{200,75,40}
\definecolor{shadecolor}{RGB}{248,215,200}

\begin{tikzpicture}
\begin{axis}[
  width=8.0cm,
  height=6.0cm,
  xlabel={\footnotesize Backend},
  ylabel={\footnotesize $E_{abs}$ \,},
  xlabel style={font=\footnotesize, yshift=3pt},
  ylabel style={font=\footnotesize, yshift=-2pt},
  xtick={1,2,3,4},
  xticklabels={Kolkata, Manila, Lima, Quito},
  xticklabel style={font=\scriptsize},
  yticklabel style={font=\scriptsize},
  xmin=0.6, xmax=4.4,
  ymin=0.02, ymax=0.68,
  ytick={0.10,0.20,0.30,0.40,0.50,0.60},
  ymajorgrids=true,
  grid style={dotted, gray!40},
  axis line style={gray!60, thin},
  tick style={thin, gray!60},
  tick align=outside,
  legend style={
    font=\scriptsize,
    at={(0.04,0.98)},
    anchor=north west,
    draw=gray!50,
    fill=white,
    inner sep=3pt,
    row sep=2pt,
    legend cell align=left,
  },
  clip=false,
]

\addplot[name path=backdoor, color=backdoorred,
  mark=diamond*, mark size=2.8pt,
  mark options={fill=backdoorred},
  line width=1.1pt, dashed, dash pattern=on 3.5pt off 2pt,
] coordinates {
  (1, 0.2323)(2, 0.3390)(3, 0.2882)(4, 0.5588)
};
\addlegendentry{CutBackdoor }

\addplot[name path=clean, color=cleangreen,
  mark=*, mark size=2.2pt,
  mark options={fill=cleangreen},
  line width=1.1pt, solid,
] coordinates {
  (1, 0.1009)(2, 0.2161)(3, 0.1609)(4, 0.4154)
};
\addlegendentry{Clean}

\addplot[shadecolor, fill opacity=0.45, draw=none]
  fill between[of=backdoor and clean];

\node[font=\fontsize{5.2}{5.2}\selectfont\bfseries, color=cleangreen,
      anchor=north, yshift=-2pt] at (axis cs:1,0.1009) {0.1009};
\node[font=\fontsize{5.2}{5.2}\selectfont\bfseries, color=cleangreen,
      anchor=north, yshift=-2pt] at (axis cs:2,0.2161) {0.2161};
\node[font=\fontsize{5.2}{5.2}\selectfont\bfseries, color=cleangreen,
      anchor=north, yshift=-2pt] at (axis cs:3,0.1609) {0.1609};
\node[font=\fontsize{5.2}{5.2}\selectfont\bfseries, color=cleangreen,
      anchor=north, yshift=-2pt] at (axis cs:4,0.4154) {0.4154};

\node[font=\fontsize{5.2}{5.2}\selectfont\bfseries, color=backdoorred,
      anchor=south, yshift=2pt] at (axis cs:1,0.2323) {0.2323};
\node[font=\fontsize{5.2}{5.2}\selectfont\bfseries, color=backdoorred,
      anchor=south, yshift=2pt] at (axis cs:2,0.3390) {0.3390};
\node[font=\fontsize{5.2}{5.2}\selectfont\bfseries, color=backdoorred,
      anchor=south, yshift=2pt] at (axis cs:3,0.2882) {0.2882};
\node[font=\fontsize{5.2}{5.2}\selectfont\bfseries, color=backdoorred,
      anchor=south, yshift=2pt] at (axis cs:4,0.5588) {0.5588};

\end{axis}
\end{tikzpicture}   
  \caption{Effect of CutBackdoor with different Backend}
  \label{fig:backend}
\end{figure}
\subsubsection{Effect of Backend}
To evaluate whether CutBackdoor remains effective across hardware with varying noise characteristics, we execute the H$_3^+$ VQE ansatz with both clean and backdoor parameters through the CutQC pipeline on four IBMQ backends: IBMQ\_Kolkata (27 qubits), IBMQ\_Manila, IBMQ\_Lima, and IBMQ\_Quito (all 5-qubit backends), each exhibiting distinct gate error rates, readout errors, and coherence times. As shown in Figure~\ref{fig:backend}, the backdoor error is consistently elevated above the clean baseline across all four backends, with amplification ranging from $1.3\times$ on IBMQ\_Quito to $2.3\times$ on IBMQ\_Kolkata. The largest absolute backdoor error ($0.559$) occurs on IBMQ\_Quito, where the clean parameter error is also elevated ($0.415$), reflecting that IBMQ\_Quito exhibits the highest gate and readout error rates among the evaluated backends, and hardware noise degrades both clean and backdoor executions. Despite this elevated noise floor, CutBackdoor maintains a consistent gap above the clean baseline on every backend evaluated, confirming that the adversarial bias persists regardless of device noise level and that the attack is independent of device-specific noise characteristics.

\section{Defense}
\label{sec:defense}

Defending against backdoor attacks on VQAs remains an open problem, with no prior defense targeting the circuit-cutting execution context. Existing quantum backdoor defenses address either structurally detectable circuit modifications or device-specific noise conditions, neither of which applies to CutBackdoor. We adapt the classical fine-pruning framework~\cite{liu2018fine} to the VQA setting as a first candidate defense. The defender's only accessible signal is $\hat{E}_{\mathrm{cut}}$ itself, since full-circuit simulation is by assumption infeasible; the defense therefore operates exclusively through this observable, without access to clean reference parameters. We treat as a dormant unit any variational parameter whose gradient magnitude with respect to $\hat{E}_{\mathrm{cut}}$ is near zero at the compromised point $\boldsymbol{\theta}^*$, then perturb these dormant parameters and re-minimize $\hat{E}_{\mathrm{cut}}$ from $\boldsymbol{\theta}^*$.
Because the stealth constraint keeps $\boldsymbol{\theta}^*$ within the clean parameter basin by construction, this descent can move toward the legitimate variational minimum.

\begin{table}[t]
  \centering
  \caption{\revA{Fine-pruning defense: cut-path error $E_{\mathrm{abs}}$ (mHa) under backdoor parameters versus after prune-then-optimize on \texttt{IBMQ\_Kolkata}. Lower is better.}}
  \label{tab:defense}
  \begin{tabular}{lcc}
    \toprule
    Benchmark & CutBackdoor & Fine-pruning \\
    \midrule
    VQE -- H$_3^+$ & 199 & 119 \\
    VQD -- H$_3^+$ & 312 &  82 \\
    VQE -- CH$_2$  & 2695 & 2493 \\
    VQD -- H$_4$   & 597 & 492 \\
    \bottomrule
  \end{tabular}
\end{table}

\revA{We implement and evaluate this procedure on \texttt{IBMQ\_Kolkata} (Table~\ref{tab:defense}). Fine-pruning substantially reduces cut-path error on the shallow H$_3^+$ circuits (VQE $199\!\to\!119$, VQD $312\!\to\!82$\,mHa) but yields only marginal reductions on the deeper CH$_2$ and H$_4$ ansätze, where the residual error remains well above chemical accuracy. More importantly, the defense requires re-optimizing the parameters, which reintroduces exactly the training cost the victim downloaded shared parameters to avoid. A victim able to afford that re-optimization would not be in the vulnerable position the threat model assumes, so fine-pruning does not neutralize CutBackdoor under realistic constraints. Finally, an adaptive attacker who distributes adversarial bias across high-gradient parameters could evade gradient-based pruning entirely, leaving both stronger attacks and more robust defenses open.}

\section{Discussion and Future Work}
\label{sec:discussion}
While CutBackdoor establishes a concrete threat against CutQC-based VQA deployment, several directions remain open. \revC{We target CutQC as the first and most widely adopted automated wire-cutting pipeline, but the attack is not tied to it. The landscape now includes QPD-based gate cutting, randomized wire cutting, tomography-driven reconstruction, and classical shadow methods, each reconstructing the observable through a distinct pipeline. Because CutBackdoor is carried by the variational parameters rather than any CutQC-specific artifact, it generalizes by re-targeting the dual-objective loss to each procedure: QPD-based gate cutting reconstructs expectation values directly, and tomography-based methods access off-diagonal Pauli contributions and present a smaller attack surface. Each demands a reformulated objective.} Emerging frameworks are also moving toward adaptive strategies that re-partition circuits when output variance exceeds a threshold. Such pipelines may disrupt the fixed reconstruction structure CutBackdoor relies on, serving as a passive defense; conversely, an adversary anticipating this could craft poisoned parameters robust across a distribution of cut placements and hardware conditions. The interplay between adaptive cutting and adversarial robustness remains a promising open direction.

\section{Conclusion}
CutBackdoor is the first parameter-supply-chain backdoor that exploits
CutQC-based execution as an adversarial trigger against VQAs, requiring no attacker presence at deployment and no modification to the circuit architecture. Across the VQE and VQD benchmarks on multiple IBM quantum processors, CutBackdoor produces cut-path energy amplification of $1.3\times$ to $2.9\times$ over clean baselines, persisting across the evaluated backends and cut placements under matched compilation; Zero-Noise Extrapolation provides only partial mitigation, and the diagonal-cost QAOA benchmark marks the attack's structural boundary. As VQA deployments scale and parameter sharing grows, securing the circuit-cutting pipeline against supply-chain
threats is a timely challenge for the quantum security community.

\bibliographystyle{ACM-Reference-Format}
\bibliography{sample-base}


\begin{thebibliography}{69}


\ifx \showCODEN    \undefined \def \showCODEN     #1{\unskip}     \fi
\ifx \showISBNx    \undefined \def \showISBNx     #1{\unskip}     \fi
\ifx \showISBNxiii \undefined \def \showISBNxiii  #1{\unskip}     \fi
\ifx \showISSN     \undefined \def \showISSN      #1{\unskip}     \fi
\ifx \showLCCN     \undefined \def \showLCCN      #1{\unskip}     \fi
\ifx \shownote     \undefined \def \shownote      #1{#1}          \fi
\ifx \showarticletitle \undefined \def \showarticletitle #1{#1}   \fi
\ifx \showURL      \undefined \def \showURL       {\relax}        \fi
\providecommand\bibfield[2]{#2}
\providecommand\bibinfo[2]{#2}
\providecommand\natexlab[1]{#1}
\providecommand\showeprint[2][]{arXiv:#2}

\bibitem[Augustino et~al\mbox{.}(2024)]%
        {augustino2024strategies}
\bibfield{author}{\bibinfo{person}{Brandon Augustino}, \bibinfo{person}{Madelyn Cain}, \bibinfo{person}{Edward Farhi}, \bibinfo{person}{Swati Gupta}, \bibinfo{person}{Sam Gutmann}, \bibinfo{person}{Daniel Ranard}, \bibinfo{person}{Eugene Tang}, {and} \bibinfo{person}{Katherine Van~Kirk}.} \bibinfo{year}{2024}\natexlab{}.
\newblock \showarticletitle{Strategies for running the QAOA at hundreds of qubits}.
\newblock \bibinfo{journal}{\emph{arXiv preprint arXiv:2410.03015}} (\bibinfo{year}{2024}).
\newblock


\bibitem[Azad and Fomichev(2023)]%
        {azad2023pennylane}
\bibfield{author}{\bibinfo{person}{Utkarsh Azad} {and} \bibinfo{person}{Stepan Fomichev}.} \bibinfo{year}{2023}\natexlab{}.
\newblock \showarticletitle{Pennylane quantum chemistry datasets}.
\newblock \bibinfo{journal}{\emph{Accessed: Jul}}  \bibinfo{volume}{19} (\bibinfo{year}{2023}), \bibinfo{pages}{2025}.
\newblock


\bibitem[Bechtold et~al\mbox{.}(2023)]%
        {bechtold2023investigating}
\bibfield{author}{\bibinfo{person}{Marvin Bechtold}, \bibinfo{person}{Johanna Barzen}, \bibinfo{person}{Frank Leymann}, \bibinfo{person}{Alexander Mandl}, \bibinfo{person}{Julian Obst}, \bibinfo{person}{Felix Truger}, {and} \bibinfo{person}{Benjamin Weder}.} \bibinfo{year}{2023}\natexlab{}.
\newblock \showarticletitle{Investigating the effect of circuit cutting in QAOA for the MaxCut problem on NISQ devices}.
\newblock \bibinfo{journal}{\emph{Quantum Science and Technology}} \bibinfo{volume}{8}, \bibinfo{number}{4} (\bibinfo{year}{2023}), \bibinfo{pages}{045022}.
\newblock


\bibitem[Bergholm et~al\mbox{.}(2018)]%
        {bergholm2018pennylane}
\bibfield{author}{\bibinfo{person}{Ville Bergholm}, \bibinfo{person}{Josh Izaac}, \bibinfo{person}{Maria Schuld}, \bibinfo{person}{Christian Gogolin}, \bibinfo{person}{Shahnawaz Ahmed}, \bibinfo{person}{Vishnu Ajith}, \bibinfo{person}{M~Sohaib Alam}, \bibinfo{person}{Guillermo Alonso-Linaje}, \bibinfo{person}{Bharath AkashNarayanan}, \bibinfo{person}{Ali Asadi}, {et~al\mbox{.}}} \bibinfo{year}{2018}\natexlab{}.
\newblock \showarticletitle{Pennylane: Automatic differentiation of hybrid quantum-classical computations}.
\newblock \bibinfo{journal}{\emph{arXiv preprint arXiv:1811.04968}} (\bibinfo{year}{2018}).
\newblock


\bibitem[Carrera~Vazquez et~al\mbox{.}(2024)]%
        {CarreraVazquez2024}
\bibfield{author}{\bibinfo{person}{Almudena Carrera~Vazquez}, \bibinfo{person}{Caroline Tornow}, \bibinfo{person}{Diego Rist{\`e}}, \bibinfo{person}{Stefan Woerner}, \bibinfo{person}{Maika Takita}, {and} \bibinfo{person}{Daniel~J. Egger}.} \bibinfo{year}{2024}\natexlab{}.
\newblock \showarticletitle{Combining quantum processors with real-time classical communication}.
\newblock \bibinfo{journal}{\emph{Nature}} \bibinfo{volume}{636}, \bibinfo{number}{8041} (\bibinfo{date}{01 Dec} \bibinfo{year}{2024}), \bibinfo{pages}{75--79}.
\newblock
\showISSN{1476-4687}
\href{https://doi.org/10.1038/s41586-024-08178-2}{doi:\nolinkurl{10.1038/s41586-024-08178-2}}


\bibitem[Cerezo et~al\mbox{.}(2021)]%
        {cerezo2021variational}
\bibfield{author}{\bibinfo{person}{Marco Cerezo}, \bibinfo{person}{Andrew Arrasmith}, \bibinfo{person}{Ryan Babbush}, \bibinfo{person}{Simon~C Benjamin}, \bibinfo{person}{Suguru Endo}, \bibinfo{person}{Keisuke Fujii}, \bibinfo{person}{Jarrod~R McClean}, \bibinfo{person}{Kosuke Mitarai}, \bibinfo{person}{Xiao Yuan}, \bibinfo{person}{Lukasz Cincio}, {et~al\mbox{.}}} \bibinfo{year}{2021}\natexlab{}.
\newblock \showarticletitle{Variational quantum algorithms}.
\newblock \bibinfo{journal}{\emph{Nature Reviews Physics}} \bibinfo{volume}{3}, \bibinfo{number}{9} (\bibinfo{year}{2021}), \bibinfo{pages}{625--644}.
\newblock


\bibitem[Chong et~al\mbox{.}(2017)]%
        {chong2017programming}
\bibfield{author}{\bibinfo{person}{Frederic~T Chong}, \bibinfo{person}{Diana Franklin}, {and} \bibinfo{person}{Margaret Martonosi}.} \bibinfo{year}{2017}\natexlab{}.
\newblock \showarticletitle{Programming languages and compiler design for realistic quantum hardware}.
\newblock \bibinfo{journal}{\emph{Nature}} \bibinfo{volume}{549}, \bibinfo{number}{7671} (\bibinfo{year}{2017}), \bibinfo{pages}{180--187}.
\newblock


\bibitem[Chu et~al\mbox{.}(2023a)]%
        {chu2023qdoor}
\bibfield{author}{\bibinfo{person}{Cheng Chu}, \bibinfo{person}{Fan Chen}, \bibinfo{person}{Philip Richerme}, {and} \bibinfo{person}{Lei Jiang}.} \bibinfo{year}{2023}\natexlab{a}.
\newblock \showarticletitle{Qdoor: Exploiting approximate synthesis for backdoor attacks in quantum neural networks}. In \bibinfo{booktitle}{\emph{2023 IEEE International Conference on Quantum Computing and Engineering (QCE)}}, Vol.~\bibinfo{volume}{1}. IEEE, \bibinfo{pages}{1098--1106}.
\newblock


\bibitem[Chu et~al\mbox{.}(2025a)]%
        {chu2025lstm}
\bibfield{author}{\bibinfo{person}{Cheng Chu}, \bibinfo{person}{Aishwarya Hastak}, {and} \bibinfo{person}{Fan Chen}.} \bibinfo{year}{2025}\natexlab{a}.
\newblock \showarticletitle{Lstm-qgan: Scalable nisq generative adversarial network}. In \bibinfo{booktitle}{\emph{ICASSP 2025-2025 IEEE International Conference on Acoustics, Speech and Signal Processing (ICASSP)}}. IEEE, \bibinfo{pages}{1--5}.
\newblock


\bibitem[Chu et~al\mbox{.}(2023b)]%
        {chu2023qtrojan}
\bibfield{author}{\bibinfo{person}{Cheng Chu}, \bibinfo{person}{Lei Jiang}, \bibinfo{person}{Martin Swany}, {and} \bibinfo{person}{Fan Chen}.} \bibinfo{year}{2023}\natexlab{b}.
\newblock \showarticletitle{Qtrojan: A circuit backdoor against quantum neural networks}. In \bibinfo{booktitle}{\emph{ICASSP 2023-2023 IEEE International Conference on Acoustics, Speech and Signal Processing (ICASSP)}}. IEEE, \bibinfo{pages}{1--5}.
\newblock


\bibitem[Chu et~al\mbox{.}(2025b)]%
        {chuqnbad}
\bibfield{author}{\bibinfo{person}{Cheng Chu}, \bibinfo{person}{Qian Lou}, \bibinfo{person}{Fan Chen}, {and} \bibinfo{person}{Lei Jiang}.} \bibinfo{year}{2025}\natexlab{b}.
\newblock \showarticletitle{QNBAD: Quantum Noise-induced Backdoor Attacks against Zero Noise Extrapolation}.
\newblock  (\bibinfo{year}{2025}).
\newblock


\bibitem[Chu et~al\mbox{.}(2023c)]%
        {chu2023iqgan}
\bibfield{author}{\bibinfo{person}{Cheng Chu}, \bibinfo{person}{Grant Skipper}, \bibinfo{person}{Martin Swany}, {and} \bibinfo{person}{Fan Chen}.} \bibinfo{year}{2023}\natexlab{c}.
\newblock \showarticletitle{Iqgan: Robust quantum generative adversarial network for image synthesis on nisq devices}. In \bibinfo{booktitle}{\emph{ICASSP 2023-2023 IEEE international conference on acoustics, speech and signal processing (ICASSP)}}. IEEE, \bibinfo{pages}{1--5}.
\newblock


\bibitem[Crooks(2018)]%
        {crooks2018performance}
\bibfield{author}{\bibinfo{person}{Gavin~E Crooks}.} \bibinfo{year}{2018}\natexlab{}.
\newblock \showarticletitle{Performance of the quantum approximate optimization algorithm on the maximum cut problem}.
\newblock \bibinfo{journal}{\emph{arXiv preprint arXiv:1811.08419}} (\bibinfo{year}{2018}).
\newblock


\bibitem[Das and Ghosh(2023)]%
        {das2023randomized}
\bibfield{author}{\bibinfo{person}{Subrata Das} {and} \bibinfo{person}{Swaroop Ghosh}.} \bibinfo{year}{2023}\natexlab{}.
\newblock \showarticletitle{Randomized reversible gate-based obfuscation for secured compilation of quantum circuit}.
\newblock \bibinfo{journal}{\emph{arXiv preprint arXiv:2305.01133}} (\bibinfo{year}{2023}).
\newblock


\bibitem[Das and Ghosh(2024)]%
        {das2024trojan}
\bibfield{author}{\bibinfo{person}{Subrata Das} {and} \bibinfo{person}{Swaroop Ghosh}.} \bibinfo{year}{2024}\natexlab{}.
\newblock \showarticletitle{Trojan attacks on variational quantum circuits and countermeasures}. In \bibinfo{booktitle}{\emph{2024 25th International Symposium on Quality Electronic Design (ISQED)}}. IEEE, \bibinfo{pages}{1--8}.
\newblock


\bibitem[Egger et~al\mbox{.}(2020a)]%
        {egger2020quantum}
\bibfield{author}{\bibinfo{person}{Daniel~J Egger}, \bibinfo{person}{Claudio Gambella}, \bibinfo{person}{Jakub Marecek}, \bibinfo{person}{Scott McFaddin}, \bibinfo{person}{Martin Mevissen}, \bibinfo{person}{Rudy Raymond}, \bibinfo{person}{Andrea Simonetto}, \bibinfo{person}{Stefan Woerner}, {and} \bibinfo{person}{Elena Yndurain}.} \bibinfo{year}{2020}\natexlab{a}.
\newblock \showarticletitle{Quantum computing for finance: State-of-the-art and future prospects}.
\newblock \bibinfo{journal}{\emph{IEEE Transactions on Quantum Engineering}}  \bibinfo{volume}{1} (\bibinfo{year}{2020}), \bibinfo{pages}{1--24}.
\newblock


\bibitem[Egger et~al\mbox{.}(2020b)]%
        {egger2020credit}
\bibfield{author}{\bibinfo{person}{Daniel~J Egger}, \bibinfo{person}{Ricardo~Garc{\'\i}a Guti{\'e}rrez}, \bibinfo{person}{Jordi~Cahu{\'e} Mestre}, {and} \bibinfo{person}{Stefan Woerner}.} \bibinfo{year}{2020}\natexlab{b}.
\newblock \showarticletitle{Credit risk analysis using quantum computers}.
\newblock \bibinfo{journal}{\emph{IEEE transactions on computers}} \bibinfo{volume}{70}, \bibinfo{number}{12} (\bibinfo{year}{2020}), \bibinfo{pages}{2136--2145}.
\newblock


\bibitem[Farhi et~al\mbox{.}(2014)]%
        {farhi2014quantum}
\bibfield{author}{\bibinfo{person}{Edward Farhi}, \bibinfo{person}{Jeffrey Goldstone}, {and} \bibinfo{person}{Sam Gutmann}.} \bibinfo{year}{2014}\natexlab{}.
\newblock \showarticletitle{A quantum approximate optimization algorithm}.
\newblock \bibinfo{journal}{\emph{arXiv preprint arXiv:1411.4028}} (\bibinfo{year}{2014}).
\newblock


\bibitem[Farrell et~al\mbox{.}(2024)]%
        {farrell2024scalable}
\bibfield{author}{\bibinfo{person}{Roland~C Farrell}, \bibinfo{person}{Marc Illa}, \bibinfo{person}{Anthony~N Ciavarella}, {and} \bibinfo{person}{Martin~J Savage}.} \bibinfo{year}{2024}\natexlab{}.
\newblock \showarticletitle{Scalable circuits for preparing ground states on digital quantum computers: The Schwinger model vacuum on 100 qubits}.
\newblock \bibinfo{journal}{\emph{PRX Quantum}} \bibinfo{volume}{5}, \bibinfo{number}{2} (\bibinfo{year}{2024}), \bibinfo{pages}{020315}.
\newblock


\bibitem[Fradkin(1989)]%
        {fradkin1989jordan}
\bibfield{author}{\bibinfo{person}{Eduardo Fradkin}.} \bibinfo{year}{1989}\natexlab{}.
\newblock \showarticletitle{Jordan-Wigner transformation for quantum-spin systems in two dimensions and fractional statistics}.
\newblock \bibinfo{journal}{\emph{Physical review letters}} \bibinfo{volume}{63}, \bibinfo{number}{3} (\bibinfo{year}{1989}), \bibinfo{pages}{322}.
\newblock


\bibitem[Fujii et~al\mbox{.}(2022)]%
        {fujii2022deep}
\bibfield{author}{\bibinfo{person}{Keisuke Fujii}, \bibinfo{person}{Kaoru Mizuta}, \bibinfo{person}{Hiroshi Ueda}, \bibinfo{person}{Kosuke Mitarai}, \bibinfo{person}{Wataru Mizukami}, {and} \bibinfo{person}{Yuya~O Nakagawa}.} \bibinfo{year}{2022}\natexlab{}.
\newblock \showarticletitle{Deep variational quantum eigensolver: A divide-and-conquer method for solving a larger problem with smaller size quantum computers}.
\newblock \bibinfo{journal}{\emph{PRX Quantum}} \bibinfo{volume}{3}, \bibinfo{number}{1} (\bibinfo{year}{2022}), \bibinfo{pages}{010346}.
\newblock


\bibitem[Galda et~al\mbox{.}(2021)]%
        {galda2021transferability}
\bibfield{author}{\bibinfo{person}{Alexey Galda}, \bibinfo{person}{Xiaoyuan Liu}, \bibinfo{person}{Danylo Lykov}, \bibinfo{person}{Yuri Alexeev}, {and} \bibinfo{person}{Ilya Safro}.} \bibinfo{year}{2021}\natexlab{}.
\newblock \showarticletitle{Transferability of optimal QAOA parameters between random graphs}. In \bibinfo{booktitle}{\emph{2021 IEEE International Conference on Quantum Computing and Engineering (QCE)}}. IEEE, \bibinfo{pages}{171--180}.
\newblock


\bibitem[Gonthier et~al\mbox{.}(2022)]%
        {gonthier2022measurements}
\bibfield{author}{\bibinfo{person}{J{\'e}r{\^o}me~F Gonthier}, \bibinfo{person}{Maxwell~D Radin}, \bibinfo{person}{Corneliu Buda}, \bibinfo{person}{Eric~J Doskocil}, \bibinfo{person}{Clena~M Abuan}, {and} \bibinfo{person}{Jhonathan Romero}.} \bibinfo{year}{2022}\natexlab{}.
\newblock \showarticletitle{Measurements as a roadblock to near-term practical quantum advantage in chemistry: Resource analysis}.
\newblock \bibinfo{journal}{\emph{Physical Review Research}} \bibinfo{volume}{4}, \bibinfo{number}{3} (\bibinfo{year}{2022}), \bibinfo{pages}{033154}.
\newblock


\bibitem[Gu et~al\mbox{.}(2017)]%
        {gu2017badnets}
\bibfield{author}{\bibinfo{person}{Tianyu Gu}, \bibinfo{person}{Brendan Dolan-Gavitt}, {and} \bibinfo{person}{Siddharth Garg}.} \bibinfo{year}{2017}\natexlab{}.
\newblock \showarticletitle{Badnets: Identifying vulnerabilities in the machine learning model supply chain}.
\newblock \bibinfo{journal}{\emph{arXiv preprint arXiv:1708.06733}} (\bibinfo{year}{2017}).
\newblock


\bibitem[Guo et~al\mbox{.}(2025)]%
        {guo2025backdoor}
\bibfield{author}{\bibinfo{person}{Ji Guo}, \bibinfo{person}{Wenbo Jiang}, \bibinfo{person}{Rui Zhang}, \bibinfo{person}{Wenshu Fan}, \bibinfo{person}{Jiachen Li}, \bibinfo{person}{Guoming Lu}, {and} \bibinfo{person}{Hongwei Li}.} \bibinfo{year}{2025}\natexlab{}.
\newblock \showarticletitle{Backdoor attacks against hybrid classical-quantum neural networks}.
\newblock \bibinfo{journal}{\emph{Neural Networks}}  \bibinfo{volume}{191} (\bibinfo{year}{2025}), \bibinfo{pages}{107776}.
\newblock


\bibitem[{Gurobi Optimization, LLC}(2026)]%
        {gurobi2026}
\bibfield{author}{\bibinfo{person}{{Gurobi Optimization, LLC}}.} \bibinfo{year}{2026}\natexlab{}.
\newblock \bibinfo{booktitle}{\emph{Gurobi Optimizer Reference Manual}}.
\newblock
\newblock
\shownote{[Online]. Available: https://www.gurobi.com}.


\bibitem[Higgott et~al\mbox{.}(2019)]%
        {higgott2019variational}
\bibfield{author}{\bibinfo{person}{Oscar Higgott}, \bibinfo{person}{Daochen Wang}, {and} \bibinfo{person}{Stephen Brierley}.} \bibinfo{year}{2019}\natexlab{}.
\newblock \showarticletitle{Variational quantum computation of excited states}.
\newblock \bibinfo{journal}{\emph{Quantum}}  \bibinfo{volume}{3} (\bibinfo{year}{2019}), \bibinfo{pages}{156}.
\newblock


\bibitem[{IBM}(2022)]%
        {qiskit_cutting2022}
\bibfield{author}{\bibinfo{person}{{IBM}}.} \bibinfo{year}{2022}\natexlab{}.
\newblock \bibinfo{title}{Qiskit addon: circuit cutting}.
\newblock
\urldef\tempurl%
\url{https://github.com/Qiskit/qiskit-addon-cutting}
\showURL{%
\tempurl}
\newblock
\shownote{Accessed: 2026-03-30}.


\bibitem[Ithier et~al\mbox{.}(2005)]%
        {ithier2005decoherence}
\bibfield{author}{\bibinfo{person}{Gregoire Ithier}, \bibinfo{person}{E Collin}, \bibinfo{person}{P Joyez}, \bibinfo{person}{PJ Meeson}, \bibinfo{person}{Denis Vion}, \bibinfo{person}{Daniel Esteve}, \bibinfo{person}{F Chiarello}, \bibinfo{person}{A Shnirman}, \bibinfo{person}{Yu Makhlin}, \bibinfo{person}{Josef Schriefl}, {et~al\mbox{.}}} \bibinfo{year}{2005}\natexlab{}.
\newblock \showarticletitle{Decoherence in a superconducting quantum bit circuit}.
\newblock \bibinfo{journal}{\emph{Physical Review B—Condensed Matter and Materials Physics}} \bibinfo{volume}{72}, \bibinfo{number}{13} (\bibinfo{year}{2005}), \bibinfo{pages}{134519}.
\newblock


\bibitem[Javadi-Abhari et~al\mbox{.}(2024)]%
        {javadi2024quantum}
\bibfield{author}{\bibinfo{person}{Ali Javadi-Abhari}, \bibinfo{person}{Matthew Treinish}, \bibinfo{person}{Kevin Krsulich}, \bibinfo{person}{Christopher~J Wood}, \bibinfo{person}{Jake Lishman}, \bibinfo{person}{Julien Gacon}, \bibinfo{person}{Simon Martiel}, \bibinfo{person}{Paul~D Nation}, \bibinfo{person}{Lev~S Bishop}, \bibinfo{person}{Andrew~W Cross}, {et~al\mbox{.}}} \bibinfo{year}{2024}\natexlab{}.
\newblock \showarticletitle{Quantum computing with Qiskit}.
\newblock \bibinfo{journal}{\emph{arXiv preprint arXiv:2405.08810}} (\bibinfo{year}{2024}).
\newblock


\bibitem[Kandala et~al\mbox{.}(2017)]%
        {kandala2017hardware}
\bibfield{author}{\bibinfo{person}{Abhinav Kandala}, \bibinfo{person}{Antonio Mezzacapo}, \bibinfo{person}{Kristan Temme}, \bibinfo{person}{Maika Takita}, \bibinfo{person}{Markus Brink}, \bibinfo{person}{Jerry~M Chow}, {and} \bibinfo{person}{Jay~M Gambetta}.} \bibinfo{year}{2017}\natexlab{}.
\newblock \showarticletitle{Hardware-efficient variational quantum eigensolver for small molecules and quantum magnets}.
\newblock \bibinfo{journal}{\emph{nature}} \bibinfo{volume}{549}, \bibinfo{number}{7671} (\bibinfo{year}{2017}), \bibinfo{pages}{242--246}.
\newblock


\bibitem[Knill(2005)]%
        {knill2005quantum}
\bibfield{author}{\bibinfo{person}{Emanuel Knill}.} \bibinfo{year}{2005}\natexlab{}.
\newblock \showarticletitle{Quantum computing with realistically noisy devices}.
\newblock \bibinfo{journal}{\emph{Nature}} \bibinfo{volume}{434}, \bibinfo{number}{7029} (\bibinfo{year}{2005}), \bibinfo{pages}{39--44}.
\newblock


\bibitem[Larocca et~al\mbox{.}(2025)]%
        {larocca2025barren}
\bibfield{author}{\bibinfo{person}{Martin Larocca}, \bibinfo{person}{Supanut Thanasilp}, \bibinfo{person}{Samson Wang}, \bibinfo{person}{Kunal Sharma}, \bibinfo{person}{Jacob Biamonte}, \bibinfo{person}{Patrick~J Coles}, \bibinfo{person}{Lukasz Cincio}, \bibinfo{person}{Jarrod~R McClean}, \bibinfo{person}{Zo{\"e} Holmes}, {and} \bibinfo{person}{Marco Cerezo}.} \bibinfo{year}{2025}\natexlab{}.
\newblock \showarticletitle{Barren plateaus in variational quantum computing}.
\newblock \bibinfo{journal}{\emph{Nature Reviews Physics}} \bibinfo{volume}{7}, \bibinfo{number}{4} (\bibinfo{year}{2025}), \bibinfo{pages}{174--189}.
\newblock


\bibitem[LaRose et~al\mbox{.}(2022)]%
        {larose2022mitiq}
\bibfield{author}{\bibinfo{person}{Ryan LaRose}, \bibinfo{person}{Andrea Mari}, \bibinfo{person}{Sarah Kaiser}, \bibinfo{person}{Peter~J Karalekas}, \bibinfo{person}{Andre~A Alves}, \bibinfo{person}{Piotr Czarnik}, \bibinfo{person}{Mohamed El~Mandouh}, \bibinfo{person}{Max~H Gordon}, \bibinfo{person}{Yousef Hindy}, \bibinfo{person}{Aaron Robertson}, {et~al\mbox{.}}} \bibinfo{year}{2022}\natexlab{}.
\newblock \showarticletitle{Mitiq: A software package for error mitigation on noisy quantum computers}.
\newblock \bibinfo{journal}{\emph{Quantum}}  \bibinfo{volume}{6} (\bibinfo{year}{2022}), \bibinfo{pages}{774}.
\newblock


\bibitem[Li et~al\mbox{.}(2019)]%
        {li2019tackling}
\bibfield{author}{\bibinfo{person}{Gushu Li}, \bibinfo{person}{Yufei Ding}, {and} \bibinfo{person}{Yuan Xie}.} \bibinfo{year}{2019}\natexlab{}.
\newblock \showarticletitle{Tackling the qubit mapping problem for NISQ-era quantum devices}. In \bibinfo{booktitle}{\emph{Proceedings of the twenty-fourth international conference on architectural support for programming languages and operating systems}}. \bibinfo{pages}{1001--1014}.
\newblock


\bibitem[Li et~al\mbox{.}(2024)]%
        {li2024hybrid}
\bibfield{author}{\bibinfo{person}{Weitang Li}, \bibinfo{person}{Zhi Yin}, \bibinfo{person}{Xiaoran Li}, \bibinfo{person}{Dongqiang Ma}, \bibinfo{person}{Shuang Yi}, \bibinfo{person}{Zhenxing Zhang}, \bibinfo{person}{Chenji Zou}, \bibinfo{person}{Kunliang Bu}, \bibinfo{person}{Maochun Dai}, \bibinfo{person}{Jie Yue}, {et~al\mbox{.}}} \bibinfo{year}{2024}\natexlab{}.
\newblock \showarticletitle{A hybrid quantum computing pipeline for real world drug discovery}.
\newblock \bibinfo{journal}{\emph{Scientific Reports}} \bibinfo{volume}{14}, \bibinfo{number}{1} (\bibinfo{year}{2024}), \bibinfo{pages}{16942}.
\newblock


\bibitem[Lidar et~al\mbox{.}(1998)]%
        {lidar1998decoherence}
\bibfield{author}{\bibinfo{person}{Daniel~A Lidar}, \bibinfo{person}{Isaac~L Chuang}, {and} \bibinfo{person}{K~Birgitta Whaley}.} \bibinfo{year}{1998}\natexlab{}.
\newblock \showarticletitle{Decoherence free subspaces for quantum computation}.
\newblock \bibinfo{journal}{\emph{arXiv preprint quant-ph/9807004}} (\bibinfo{year}{1998}).
\newblock


\bibitem[Liu et~al\mbox{.}(2018)]%
        {liu2018fine}
\bibfield{author}{\bibinfo{person}{Kang Liu}, \bibinfo{person}{Brendan Dolan-Gavitt}, {and} \bibinfo{person}{Siddharth Garg}.} \bibinfo{year}{2018}\natexlab{}.
\newblock \showarticletitle{Fine-pruning: Defending against backdooring attacks on deep neural networks}. In \bibinfo{booktitle}{\emph{International symposium on research in attacks, intrusions, and defenses}}. Springer, \bibinfo{pages}{273--294}.
\newblock


\bibitem[Liu et~al\mbox{.}(2022)]%
        {9669165}
\bibfield{author}{\bibinfo{person}{Xiaoyuan Liu}, \bibinfo{person}{Anthony Angone}, \bibinfo{person}{Ruslan Shaydulin}, \bibinfo{person}{Ilya Safro}, \bibinfo{person}{Yuri Alexeev}, {and} \bibinfo{person}{Lukasz Cincio}.} \bibinfo{year}{2022}\natexlab{}.
\newblock \showarticletitle{Layer VQE: A Variational Approach for Combinatorial Optimization on Noisy Quantum Computers}.
\newblock \bibinfo{journal}{\emph{IEEE Transactions on Quantum Engineering}}  \bibinfo{volume}{3} (\bibinfo{year}{2022}), \bibinfo{pages}{1--20}.
\newblock
\href{https://doi.org/10.1109/TQE.2021.3140190}{doi:\nolinkurl{10.1109/TQE.2021.3140190}}


\bibitem[Lowe et~al\mbox{.}(2023)]%
        {lowe2023fast}
\bibfield{author}{\bibinfo{person}{Angus Lowe}, \bibinfo{person}{Matija Medvidovi{\'c}}, \bibinfo{person}{Anthony Hayes}, \bibinfo{person}{Lee~J O'Riordan}, \bibinfo{person}{Thomas~R Bromley}, \bibinfo{person}{Juan~Miguel Arrazola}, {and} \bibinfo{person}{Nathan Killoran}.} \bibinfo{year}{2023}\natexlab{}.
\newblock \showarticletitle{Fast quantum circuit cutting with randomized measurements}.
\newblock \bibinfo{journal}{\emph{Quantum}}  \bibinfo{volume}{7} (\bibinfo{year}{2023}), \bibinfo{pages}{934}.
\newblock


\bibitem[McArdle et~al\mbox{.}(2020)]%
        {mcardle2020quantum}
\bibfield{author}{\bibinfo{person}{Sam McArdle}, \bibinfo{person}{Suguru Endo}, \bibinfo{person}{Al{\'a}n Aspuru-Guzik}, \bibinfo{person}{Simon~C Benjamin}, {and} \bibinfo{person}{Xiao Yuan}.} \bibinfo{year}{2020}\natexlab{}.
\newblock \showarticletitle{Quantum computational chemistry}.
\newblock \bibinfo{journal}{\emph{Reviews of Modern Physics}} \bibinfo{volume}{92}, \bibinfo{number}{1} (\bibinfo{year}{2020}), \bibinfo{pages}{015003}.
\newblock


\bibitem[McClean et~al\mbox{.}(2018)]%
        {mcclean2018barren}
\bibfield{author}{\bibinfo{person}{Jarrod~R McClean}, \bibinfo{person}{Sergio Boixo}, \bibinfo{person}{Vadim~N Smelyanskiy}, \bibinfo{person}{Ryan Babbush}, {and} \bibinfo{person}{Hartmut Neven}.} \bibinfo{year}{2018}\natexlab{}.
\newblock \showarticletitle{Barren plateaus in quantum neural network training landscapes}.
\newblock \bibinfo{journal}{\emph{Nature communications}} \bibinfo{volume}{9}, \bibinfo{number}{1} (\bibinfo{year}{2018}), \bibinfo{pages}{4812}.
\newblock


\bibitem[McClean et~al\mbox{.}(2016)]%
        {mcclean2016theory}
\bibfield{author}{\bibinfo{person}{Jarrod~R McClean}, \bibinfo{person}{Jonathan Romero}, \bibinfo{person}{Ryan Babbush}, {and} \bibinfo{person}{Al{\'a}n Aspuru-Guzik}.} \bibinfo{year}{2016}\natexlab{}.
\newblock \showarticletitle{The theory of variational hybrid quantum-classical algorithms}.
\newblock \bibinfo{journal}{\emph{New Journal of Physics}} \bibinfo{volume}{18}, \bibinfo{number}{2} (\bibinfo{year}{2016}), \bibinfo{pages}{023023}.
\newblock


\bibitem[Mitarai and Fujii(2021)]%
        {mitarai2021constructing}
\bibfield{author}{\bibinfo{person}{Kosuke Mitarai} {and} \bibinfo{person}{Keisuke Fujii}.} \bibinfo{year}{2021}\natexlab{}.
\newblock \showarticletitle{Constructing a virtual two-qubit gate by sampling single-qubit operations}.
\newblock \bibinfo{journal}{\emph{New Journal of Physics}} \bibinfo{volume}{23}, \bibinfo{number}{2} (\bibinfo{year}{2021}), \bibinfo{pages}{023021}.
\newblock


\bibitem[Murali et~al\mbox{.}(2019)]%
        {murali2019noise}
\bibfield{author}{\bibinfo{person}{Prakash Murali}, \bibinfo{person}{Jonathan~M Baker}, \bibinfo{person}{Ali Javadi-Abhari}, \bibinfo{person}{Frederic~T Chong}, {and} \bibinfo{person}{Margaret Martonosi}.} \bibinfo{year}{2019}\natexlab{}.
\newblock \showarticletitle{Noise-adaptive compiler mappings for noisy intermediate-scale quantum computers}. In \bibinfo{booktitle}{\emph{Proceedings of the twenty-fourth international conference on architectural support for programming languages and operating systems}}. \bibinfo{pages}{1015--1029}.
\newblock


\bibitem[O’Malley et~al\mbox{.}(2016)]%
        {o2016scalable}
\bibfield{author}{\bibinfo{person}{Peter~JJ O’Malley}, \bibinfo{person}{Ryan Babbush}, \bibinfo{person}{Ian~D Kivlichan}, \bibinfo{person}{Jonathan Romero}, \bibinfo{person}{Jarrod~R McClean}, \bibinfo{person}{Rami Barends}, \bibinfo{person}{Julian Kelly}, \bibinfo{person}{Pedram Roushan}, \bibinfo{person}{Andrew Tranter}, \bibinfo{person}{Nan Ding}, {et~al\mbox{.}}} \bibinfo{year}{2016}\natexlab{}.
\newblock \showarticletitle{Scalable quantum simulation of molecular energies}.
\newblock \bibinfo{journal}{\emph{Physical Review X}} \bibinfo{volume}{6}, \bibinfo{number}{3} (\bibinfo{year}{2016}), \bibinfo{pages}{031007}.
\newblock


\bibitem[Patel et~al\mbox{.}(2022)]%
        {patel2022quest}
\bibfield{author}{\bibinfo{person}{Tirthak Patel}, \bibinfo{person}{Ed Younis}, \bibinfo{person}{Costin Iancu}, \bibinfo{person}{Wibe de Jong}, {and} \bibinfo{person}{Devesh Tiwari}.} \bibinfo{year}{2022}\natexlab{}.
\newblock \showarticletitle{Quest: systematically approximating quantum circuits for higher output fidelity}. In \bibinfo{booktitle}{\emph{Proceedings of the 27th ACM International Conference on Architectural Support for Programming Languages and Operating Systems}}. \bibinfo{pages}{514--528}.
\newblock


\bibitem[Peng et~al\mbox{.}(2020)]%
        {peng2020simulating}
\bibfield{author}{\bibinfo{person}{Tianyi Peng}, \bibinfo{person}{Aram~W Harrow}, \bibinfo{person}{Maris Ozols}, {and} \bibinfo{person}{Xiaodi Wu}.} \bibinfo{year}{2020}\natexlab{}.
\newblock \showarticletitle{Simulating large quantum circuits on a small quantum computer}.
\newblock \bibinfo{journal}{\emph{Physical review letters}} \bibinfo{volume}{125}, \bibinfo{number}{15} (\bibinfo{year}{2020}), \bibinfo{pages}{150504}.
\newblock


\bibitem[Perlin et~al\mbox{.}(2021)]%
        {perlin2021quantum}
\bibfield{author}{\bibinfo{person}{Michael~A Perlin}, \bibinfo{person}{Zain~H Saleem}, \bibinfo{person}{Martin Suchara}, {and} \bibinfo{person}{James~C Osborn}.} \bibinfo{year}{2021}\natexlab{}.
\newblock \showarticletitle{Quantum circuit cutting with maximum-likelihood tomography}.
\newblock \bibinfo{journal}{\emph{npj Quantum Information}} \bibinfo{volume}{7}, \bibinfo{number}{1} (\bibinfo{year}{2021}), \bibinfo{pages}{64}.
\newblock


\bibitem[Peruzzo et~al\mbox{.}(2014)]%
        {peruzzo2014variational}
\bibfield{author}{\bibinfo{person}{Alberto Peruzzo}, \bibinfo{person}{Jarrod McClean}, \bibinfo{person}{Peter Shadbolt}, \bibinfo{person}{Man-Hong Yung}, \bibinfo{person}{Xiao-Qi Zhou}, \bibinfo{person}{Peter~J Love}, \bibinfo{person}{Al{\'a}n Aspuru-Guzik}, {and} \bibinfo{person}{Jeremy~L O’brien}.} \bibinfo{year}{2014}\natexlab{}.
\newblock \showarticletitle{A variational eigenvalue solver on a photonic quantum processor}.
\newblock \bibinfo{journal}{\emph{Nature communications}} \bibinfo{volume}{5}, \bibinfo{number}{1} (\bibinfo{year}{2014}), \bibinfo{pages}{4213}.
\newblock


\bibitem[Piveteau and Sutter(2023)]%
        {piveteau2023circuit}
\bibfield{author}{\bibinfo{person}{Christophe Piveteau} {and} \bibinfo{person}{David Sutter}.} \bibinfo{year}{2023}\natexlab{}.
\newblock \showarticletitle{Circuit knitting with classical communication}.
\newblock \bibinfo{journal}{\emph{IEEE Transactions on Information Theory}} \bibinfo{volume}{70}, \bibinfo{number}{4} (\bibinfo{year}{2023}), \bibinfo{pages}{2734--2745}.
\newblock


\bibitem[Preskill(2018)]%
        {preskill2018quantum}
\bibfield{author}{\bibinfo{person}{John Preskill}.} \bibinfo{year}{2018}\natexlab{}.
\newblock \showarticletitle{Quantum computing in the NISQ era and beyond}.
\newblock \bibinfo{journal}{\emph{Quantum}}  \bibinfo{volume}{2} (\bibinfo{year}{2018}), \bibinfo{pages}{79}.
\newblock


\bibitem[Romero et~al\mbox{.}(2019)]%
        {romero2019strategies}
\bibfield{author}{\bibinfo{person}{Jonathan Romero}, \bibinfo{person}{Ryan Babbush}, \bibinfo{person}{Jarrod~R McClean}, \bibinfo{person}{Cornelius Hempel}, \bibinfo{person}{Peter~J Love}, {and} \bibinfo{person}{Al{\'a}n Aspuru-Guzik}.} \bibinfo{year}{2019}\natexlab{}.
\newblock \showarticletitle{Strategies for quantum computing molecular energies using the unitary coupled cluster ansatz}.
\newblock \bibinfo{journal}{\emph{Quantum Science and Technology}} \bibinfo{volume}{4}, \bibinfo{number}{1} (\bibinfo{year}{2019}), \bibinfo{pages}{014008}.
\newblock


\bibitem[Sarovar et~al\mbox{.}(2020)]%
        {sarovar2020detecting}
\bibfield{author}{\bibinfo{person}{Mohan Sarovar}, \bibinfo{person}{Timothy Proctor}, \bibinfo{person}{Kenneth Rudinger}, \bibinfo{person}{Kevin Young}, \bibinfo{person}{Erik Nielsen}, {and} \bibinfo{person}{Robin Blume-Kohout}.} \bibinfo{year}{2020}\natexlab{}.
\newblock \showarticletitle{Detecting crosstalk errors in quantum information processors}.
\newblock \bibinfo{journal}{\emph{Quantum}}  \bibinfo{volume}{4} (\bibinfo{year}{2020}), \bibinfo{pages}{321}.
\newblock


\bibitem[Sawaya et~al\mbox{.}(2024)]%
        {sawaya2024hamlib}
\bibfield{author}{\bibinfo{person}{Nicolas~PD Sawaya}, \bibinfo{person}{Daniel Marti-Dafcik}, \bibinfo{person}{Yang Ho}, \bibinfo{person}{Daniel~P Tabor}, \bibinfo{person}{David E~Bernal Neira}, \bibinfo{person}{Alicia~B Magann}, \bibinfo{person}{Shavindra Premaratne}, \bibinfo{person}{Pradeep Dubey}, \bibinfo{person}{Anne Matsuura}, \bibinfo{person}{Nathan Bishop}, {et~al\mbox{.}}} \bibinfo{year}{2024}\natexlab{}.
\newblock \showarticletitle{HamLib: A library of Hamiltonians for benchmarking quantum algorithms and hardware}.
\newblock \bibinfo{journal}{\emph{Quantum}}  \bibinfo{volume}{8} (\bibinfo{year}{2024}), \bibinfo{pages}{1559}.
\newblock


\bibitem[Schuld and Petruccione(2018)]%
        {schuld2018supervised}
\bibfield{author}{\bibinfo{person}{Maria Schuld} {and} \bibinfo{person}{Francesco Petruccione}.} \bibinfo{year}{2018}\natexlab{}.
\newblock \bibinfo{booktitle}{\emph{Supervised learning with quantum computers}}. Vol.~\bibinfo{volume}{17}.
\newblock \bibinfo{publisher}{Springer}.
\newblock


\bibitem[Shaydulin et~al\mbox{.}(2023)]%
        {shaydulin2023parameter}
\bibfield{author}{\bibinfo{person}{Ruslan Shaydulin}, \bibinfo{person}{Phillip~C Lotshaw}, \bibinfo{person}{Jeffrey Larson}, \bibinfo{person}{James Ostrowski}, {and} \bibinfo{person}{Travis~S Humble}.} \bibinfo{year}{2023}\natexlab{}.
\newblock \showarticletitle{Parameter transfer for quantum approximate optimization of weighted maxcut}.
\newblock \bibinfo{journal}{\emph{ACM Transactions on Quantum Computing}} \bibinfo{volume}{4}, \bibinfo{number}{3} (\bibinfo{year}{2023}), \bibinfo{pages}{1--15}.
\newblock


\bibitem[Shaydulin et~al\mbox{.}(2021)]%
        {shaydulin2021qaoakit}
\bibfield{author}{\bibinfo{person}{Ruslan Shaydulin}, \bibinfo{person}{Kunal Marwaha}, \bibinfo{person}{Jonathan Wurtz}, {and} \bibinfo{person}{Phillip~C Lotshaw}.} \bibinfo{year}{2021}\natexlab{}.
\newblock \showarticletitle{QAOAKit: A toolkit for reproducible study, application, and verification of the QAOA}. In \bibinfo{booktitle}{\emph{2021 IEEE/ACM Second International Workshop on Quantum Computing Software (QCS)}}. IEEE, \bibinfo{pages}{64--71}.
\newblock


\bibitem[Shaydulin et~al\mbox{.}(2019)]%
        {shaydulin2019multistart}
\bibfield{author}{\bibinfo{person}{Ruslan Shaydulin}, \bibinfo{person}{Ilya Safro}, {and} \bibinfo{person}{Jeffrey Larson}.} \bibinfo{year}{2019}\natexlab{}.
\newblock \showarticletitle{Multistart methods for quantum approximate optimization}. In \bibinfo{booktitle}{\emph{2019 IEEE high performance extreme computing conference (HPEC)}}. IEEE, \bibinfo{pages}{1--8}.
\newblock


\bibitem[Skogh et~al\mbox{.}(2023)]%
        {skogh2023accelerating}
\bibfield{author}{\bibinfo{person}{M{\aa}rten Skogh}, \bibinfo{person}{Oskar Leinonen}, \bibinfo{person}{Phalgun Lolur}, {and} \bibinfo{person}{Martin Rahm}.} \bibinfo{year}{2023}\natexlab{}.
\newblock \showarticletitle{Accelerating variational quantum eigensolver convergence using parameter transfer}.
\newblock \bibinfo{journal}{\emph{Electronic Structure}} \bibinfo{volume}{5}, \bibinfo{number}{3} (\bibinfo{year}{2023}), \bibinfo{pages}{035002}.
\newblock


\bibitem[Smith et~al\mbox{.}(2025)]%
        {smith2025single}
\bibfield{author}{\bibinfo{person}{Molly~C Smith}, \bibinfo{person}{Aaron~D Leu}, \bibinfo{person}{Koichiro Miyanishi}, \bibinfo{person}{Mario~F Gely}, {and} \bibinfo{person}{David~M Lucas}.} \bibinfo{year}{2025}\natexlab{}.
\newblock \showarticletitle{Single-qubit gates with errors at the 10-7 level}.
\newblock \bibinfo{journal}{\emph{Physical Review Letters}} \bibinfo{volume}{134}, \bibinfo{number}{23} (\bibinfo{year}{2025}), \bibinfo{pages}{230601}.
\newblock


\bibitem[Sureshbabu et~al\mbox{.}(2024)]%
        {sureshbabu2024parameter}
\bibfield{author}{\bibinfo{person}{Shree~Hari Sureshbabu}, \bibinfo{person}{Dylan Herman}, \bibinfo{person}{Ruslan Shaydulin}, \bibinfo{person}{Joao Basso}, \bibinfo{person}{Shouvanik Chakrabarti}, \bibinfo{person}{Yue Sun}, {and} \bibinfo{person}{Marco Pistoia}.} \bibinfo{year}{2024}\natexlab{}.
\newblock \showarticletitle{Parameter setting in quantum approximate optimization of weighted problems}.
\newblock \bibinfo{journal}{\emph{Quantum}}  \bibinfo{volume}{8} (\bibinfo{year}{2024}), \bibinfo{pages}{1231}.
\newblock


\bibitem[Tang et~al\mbox{.}(2021)]%
        {tang2021cutqc}
\bibfield{author}{\bibinfo{person}{Wei Tang}, \bibinfo{person}{Teague Tomesh}, \bibinfo{person}{Martin Suchara}, \bibinfo{person}{Jeffrey Larson}, {and} \bibinfo{person}{Margaret Martonosi}.} \bibinfo{year}{2021}\natexlab{}.
\newblock \showarticletitle{Cutqc: using small quantum computers for large quantum circuit evaluations}. In \bibinfo{booktitle}{\emph{Proceedings of the 26th ACM International conference on architectural support for programming languages and operating systems}}. \bibinfo{pages}{473--486}.
\newblock


\bibitem[Temme et~al\mbox{.}(2017)]%
        {temme2017error}
\bibfield{author}{\bibinfo{person}{Kristan Temme}, \bibinfo{person}{Sergey Bravyi}, {and} \bibinfo{person}{Jay~M Gambetta}.} \bibinfo{year}{2017}\natexlab{}.
\newblock \showarticletitle{Error mitigation for short-depth quantum circuits}.
\newblock \bibinfo{journal}{\emph{Physical review letters}} \bibinfo{volume}{119}, \bibinfo{number}{18} (\bibinfo{year}{2017}), \bibinfo{pages}{180509}.
\newblock


\bibitem[Tilly et~al\mbox{.}(2022)]%
        {tilly2022variational}
\bibfield{author}{\bibinfo{person}{Jules Tilly}, \bibinfo{person}{Hongxiang Chen}, \bibinfo{person}{Shuxiang Cao}, \bibinfo{person}{Dario Picozzi}, \bibinfo{person}{Kanav Setia}, \bibinfo{person}{Ying Li}, \bibinfo{person}{Edward Grant}, \bibinfo{person}{Leonard Wossnig}, \bibinfo{person}{Ivan Rungger}, \bibinfo{person}{George~H Booth}, {et~al\mbox{.}}} \bibinfo{year}{2022}\natexlab{}.
\newblock \showarticletitle{The variational quantum eigensolver: a review of methods and best practices}.
\newblock \bibinfo{journal}{\emph{Physics Reports}}  \bibinfo{volume}{986} (\bibinfo{year}{2022}), \bibinfo{pages}{1--128}.
\newblock


\bibitem[Typaldos et~al\mbox{.}(2024)]%
        {typaldos2024leveraging}
\bibfield{author}{\bibinfo{person}{George Typaldos}, \bibinfo{person}{Wei Tang}, {and} \bibinfo{person}{Jakub Szefer}.} \bibinfo{year}{2024}\natexlab{}.
\newblock \showarticletitle{Leveraging quantum circuit cutting for obfuscation and intellectual property protection}. In \bibinfo{booktitle}{\emph{2024 IEEE International Conference on Quantum Computing and Engineering (QCE)}}, Vol.~\bibinfo{volume}{1}. IEEE, \bibinfo{pages}{1824--1834}.
\newblock


\bibitem[Typaldos et~al\mbox{.}(2025)]%
        {typaldos2025quantum}
\bibfield{author}{\bibinfo{person}{George Typaldos}, \bibinfo{person}{Theodoros Trochatos}, {and} \bibinfo{person}{Jakub Szefer}.} \bibinfo{year}{2025}\natexlab{}.
\newblock \showarticletitle{Quantum Circuit Cutting: A Security Methodology}. In \bibinfo{booktitle}{\emph{2025 IEEE International Conference on Quantum Computing and Engineering (QCE)}}, Vol.~\bibinfo{volume}{1}. IEEE, \bibinfo{pages}{417--427}.
\newblock


\bibitem[Wang et~al\mbox{.}(2021)]%
        {wang2021noise}
\bibfield{author}{\bibinfo{person}{Samson Wang}, \bibinfo{person}{Enrico Fontana}, \bibinfo{person}{Marco Cerezo}, \bibinfo{person}{Kunal Sharma}, \bibinfo{person}{Akira Sone}, \bibinfo{person}{Lukasz Cincio}, {and} \bibinfo{person}{Patrick~J Coles}.} \bibinfo{year}{2021}\natexlab{}.
\newblock \showarticletitle{Noise-induced barren plateaus in variational quantum algorithms}.
\newblock \bibinfo{journal}{\emph{Nature communications}} \bibinfo{volume}{12}, \bibinfo{number}{1} (\bibinfo{year}{2021}), \bibinfo{pages}{6961}.
\newblock


\bibitem[Wu et~al\mbox{.}(2021)]%
        {wu2021towards}
\bibfield{author}{\bibinfo{person}{Anbang Wu}, \bibinfo{person}{Gushu Li}, \bibinfo{person}{Yuke Wang}, \bibinfo{person}{Boyuan Feng}, \bibinfo{person}{Yufei Ding}, {and} \bibinfo{person}{Yuan Xie}.} \bibinfo{year}{2021}\natexlab{}.
\newblock \showarticletitle{Towards efficient ansatz architecture for variational quantum algorithms}.
\newblock \bibinfo{journal}{\emph{arXiv preprint arXiv:2111.13730}} (\bibinfo{year}{2021}).
\newblock


\end{thebibliography}

\end{document}